\documentclass[prd,superscriptaddress,nofootinbib,notitlepage]{revtex4-1}
\pdfoutput=1
\usepackage[colorlinks=true,citecolor=blue,linkcolor=black]{hyperref}
\usepackage{epsfig,amsmath,amssymb,slashed}
\usepackage{graphicx}
\usepackage{color}
\usepackage{siunitx}
\usepackage{color}
\usepackage{bm}
\usepackage{subcaption}
\usepackage{tabularx}
\usepackage{cleveref}
\usepackage{float}
\usepackage{amsthm} 
\usepackage{enumitem} 

\theoremstyle{definition}
\newtheorem{theorem}{Theorem}
\newtheorem{definition}{Definition}
\newtheorem{corollary}{Corollary}

\DeclareMathOperator*{\dist}{dist}

\newcommand{\bra}[1]{\langle #1|}
\newcommand{\ket}[1]{|#1\rangle}

\newcommand{\di}{{\rm d}}
\newcommand{\ii}{i}

 \def\wT{{\widehat T}}

\def\wO{{\widehat O}}

\def\wphi{{\widehat{\phi}}}

\def\wrho{{\widehat{\rho}}}

\newcommand{\Tr}{{\rm Tr}}  
  
\newcommand{\e}{{\rm e}}

\newcommand{\be}{\begin{equation}}
\newcommand{\ee}{\end{equation}}                                                                               
\def\bea{\begin{eqnarray}}
\def\eea{\end{eqnarray}}

\begin{document}
%

\title{Universal analytic dependence of the stress-energy tensor at thermodynamic equilibrium in curved space-time} 

\author{F. Becattini}
\affiliation{Universit\`a degli studi di Firenze and INFN Sezione di Firenze,\\
Via G. Sansone 1, I-50019 Sesto Fiorentino (Florence), Italy}

\author{F. Palli}
\affiliation{Universit\`a degli studi di Firenze and INFN Sezione di Firenze,\\
Via G. Sansone 1, I-50019 Sesto Fiorentino (Florence), Italy}

\begin{abstract}
The mean value of the stress-energy tensor of a given quantum field theory at global thermodynamic equilibrium in a curved 
space-time can be expressed in terms of the derivatives of the Killing four-temperature field and the derivatives of the metric tensor.
Its asymptotic expansion about zero includes an analytic part made of integer powers of these derivatives - corresponding
to the so-called gradient expansion - as well as non-perturbative corrections. By using available exact solutions for the free real 
massless scalar field, we show that in the case of Minkowski, de Sitter, anti-de Sitter, and closed Einstein universe, the analytic 
part - obtained through the procedure of analytic distillation - has a finite number of terms and it is the same once expressed 
in a covariant form. On the other hand, non-universal terms are non-analytic in these derivatives and correspond to boundary 
conditions or to specific global properties of the space-time. We argue that the universality of the analytic part extends to 
any quantum field theory on a curved background. 
\end{abstract}

\maketitle

\section{Introduction}
\label{sec:intro}

The calculation of the stress-energy tensor at global thermodynamic equilibrium in curved space-time has been the subject 
of several investigations over some decades \cite{Allen:1986ty, Kennedy:1979ar, Ambrus:2018olh, Panerai:2015xlr}. Various 
approximation methods have been developed to study its properties in static space-times \cite{Page:1982fm, Brown:1985ri, Brown:1986jy, Frolov:1987gw} 
and the high-temperature regime has been explored \cite{Altaie:1978dx, Dowker:1988jw, Fursaev:2001yu}. 
Particularly, the response of the stress-energy tensor to the curvature was studied with perturbative methods 
\cite{Moore:2012tc,Kovtun:2018dvd}. The perturbative expansion is in fact a so-called gradient expansion, that is an asymptotic 
expansion in the derivatives of the four-temperature (including gradients of temperature and flow velocity) and of the metric. 
Central to the idea of the gradient expansion is the tacit assumption that its coefficients are universal, i.e. they do not 
depend on the particular space-time under consideration, so there is an actual thermodynamic response of the stress-energy tensor 
to the curvature, acceleration and vorticity.
However, this assumption has not been verified by studying the expansion of an otherwise found exact solution in
curved space-time. The goal of this paper is to fill this gap, namely to show that the asymptotic expansion of available
exact solutions for the real massless free scalar field in the derivatives of the metric and the four-temperature (which is 
a Killing vector field at equilibrium) are indeed universal provided that these derivatives are regarded as complex variables. 

In our analysis, we will consider a real free massless scalar field, either conformally ($\xi=1/6)$ or minimally $(\xi=0)$ coupled 
to gravity, described by the following action:
\begin{equation*}
    \widehat S_\xi=\frac{1}{2}\int \di^4x \; \sqrt{-g}\left(g^{\mu\nu}\nabla_{\mu}\widehat{\phi}
    \, \nabla_{\nu}\widehat{\phi}- \xi R\widehat{\phi}^{\ 2}\right).
\end{equation*}
For this field, we will use the exact renormalized thermal expectation value of the stress-energy tensor operator:
\begin{align*}
    \widehat{T}_{\mu\nu} =& \;\left(1-2\xi\right)\nabla_\mu \wphi \; 
     \nabla_\nu\wphi + \left(2\xi-\frac{1}{2}\right)g_{\mu\nu}g^{\rho\sigma}\nabla_\rho \wphi \; 
     \nabla_\sigma\wphi -2\xi \wphi \; \nabla_{\mu}\nabla_{\nu}\wphi \; +2\xi g_{\mu\nu}\wphi \; \Box \; \wphi - \xi G_{\mu\nu} \wphi^{\ 2}
\end{align*}
where $G_{\mu\nu}$ is the Einstein tensor. We will check that the coefficients of the asymptotic expansion of the expectation value
of the stress-energy tensor proportional to integer powers of curvature, acceleration and vorticity filtered by the so-called
analytic distillation are independent of the particular space-time in four cases: Minkowski space-time, de Sitter, anti-de Sitter and 
closed Einstein universe. This result confirms the aforementioned tacit assumption on the gradient expansion and suggests that this 
universality extends to {\em any} quantum field theory. On the other hand, we will see that the coefficients of non-integer powers of
derivatives appearing in the asymptotic expansion, which are non-analytic, are not universal, i.e. they do depend on the particular 
space-time. 

The paper is organized as follows: in Section \ref{sec:gte} the basic concepts of thermodynamic equilibrium in curved
space-time are summarized; in Section \ref{sec:conj} the universal analytic conjecture is presented and the method of
analytic distillation is discussed in some detail; in Section \ref{sec:mink} we report the known exact solution in 
Minkowski space-time; in Sections \ref{sec:ads},\ref{sec:ds} we report the exact solution known in anti-de Sitter and
de Sitter space-time and we apply analytic distillation to extract its analytic part; in Section \ref{sec:ceu} we do
the same for the closed Einstein universe; in Section \ref{sec:uni} we recast the obtained analytic distillates in a
covariant form to be properly compared with each other and with the Minkowski space-time solution; in Section \ref{sec:unruh}
we discuss the relation between our analytic distillates and the Unruh temperature in various space-times; finally, 
we draw the conclusions in Section \ref{sec:conclu}.

\subsection*{Notation and conventions}

We use the mostly minus signature for the metric tensor $(+---)$. We define the Riemann tensor as:
\begin{equation*}
    R^{\mu}_{\ \nu\rho\sigma}=\partial_{\sigma}\Gamma^{\mu}_{\ \nu\rho}-\partial_{\rho}
    \Gamma^{\mu}_{\ \nu\sigma}+\Gamma^{\mu}_{\ \sigma\alpha}\Gamma^{\alpha}_{\ \nu\rho}-
    \Gamma^{\mu}_{\ \rho\alpha}\Gamma^{\alpha}_{\ \nu\sigma}
\end{equation*}
and the Ricci tensor as $R_{\mu\nu}=R^{\alpha}_{\ \mu\alpha\nu}$. Our choice is represented as $(---)$ in the convention 
table of Misner, Thorne and Wheeler. The totally anti-symmetric tensor $E_{\mu\nu\rho\sigma}$ is defined as $\sqrt{-g}
\epsilon_{\mu\nu\rho\sigma}$ with $\epsilon$ the Levi-Civita symbol and $\epsilon^{0123}=+1$. 
The transverse projector to the four-velocity $u$ is $\Delta^{\mu\nu} = g^{\mu\nu}- u^\mu u^\nu$.

\section{Global thermodynamic equilibrium in curved space-time}
\label{sec:gte}

In the general covariant formulation of quantum statistical mechanics the state of a quantum system at global thermodynamic 
equilibrium is encoded in the following density operator \cite{Zubarev:1979afm, VANWEERT1982133, Becattini:2016stj}:
\begin{equation}\label{densop}
    \wrho =\frac{1}{Z}\exp{\left(-\int_{\Sigma}\di \Sigma_{\mu}\wT^{\mu\nu}\beta_\nu\right)}
\end{equation}
where $\Sigma$ is a space-like hypersurface belonging to a foliation of the space-time, and $\beta_\nu$ is the inverse 
temperature four-vector, or simply the four-temperature, with:
$$
T=1/\sqrt{\beta^2}
$$
being the proper or comoving temperature, measured by a thermometer moving with four-velocity $u= \beta/\sqrt{\beta^2}$.
A Killing vector can be suitably normalized in asymptotically flat space-times; alternatively, if there is a set of 
parameters $a_1,\ldots,a_n$ such that in the limit $a_i \to 0$ the metric becomes flat, one can normalize the Killing
vector so as to:
$$
  \beta = \frac{1}{T_0} \xi \qquad \lim_{a_i \to 0} \xi^\mu \xi^\nu g_{\mu\nu}(a_i) = 1 
$$
The dimensional parameter $T_0$ then becomes the {\em global temperature} and the relation between proper and global temperature reads:
$$
   T = \frac{T_0}{\sqrt{\xi^2}}
$$
which is the so-called Tolman relation.

At global equilibrium the operator $\wrho$ must be time-independent; in the language of general relativity, it should be 
independent of the spacelike hypersurface $\Sigma$. Provided that specific boundary conditions are fulfilled, the Gauss 
theorem implies that a necessary and sufficient condition for this to happen is that $\beta$ is a Killing vector:
\begin{equation}\label{killeq}
    \nabla_\mu\beta_\nu+\nabla_\nu\beta_\mu = 0
\end{equation}
Wherever $\beta$ is timelike, the calculation of thermal averages of local operators yields finite results and we can 
consider observers moving along its orbits. While the symmetric part of the gradient of $\beta_\mu$ is zero due 
to the Killing equation \eqref{killeq}, the antisymmetric part defines a rank two antisymmetric tensor $\varpi_{\mu\nu}$, 
called thermal vorticity. This tensor can be decomposed as the sum of a longitudinal (or electric) and a transverse (or magnetic) 
part using the time-like four-vector $u$:
\begin{equation}\label{thvort}
    \varpi_{\mu\nu} = \nabla_{[\nu}\beta_{\mu]} = \alpha_\mu u_\nu - \alpha_\nu u_\mu + 
    E_{\mu\nu\rho\sigma} w^\rho u^\sigma
\end{equation}
where
$$
E_{\mu\nu\rho\sigma} = \sqrt{-g} \, \epsilon_{\mu\nu\rho\sigma}
$$
and:
\begin{equation}\label{accthvort}
    \alpha_\mu = u^\nu\nabla_\nu\beta_\mu, \qquad w_\mu = 
    \frac{1}{2} E_{\mu\rho\sigma\nu}\nabla^\rho\beta^\sigma u^\nu
\end{equation}
At global equilibrium, $\alpha_\mu$ and $w_\mu$ can be linked to their kinematic analogs, acceleration and vorticity, 
through:
\begin{equation}\label{accthvort2}
    A_\mu=u^\nu\nabla_\nu u_\mu = \frac{1}{\sqrt{\beta^2}}\alpha_\mu, \qquad \omega_\mu=\frac{1}{2}
    E_{\mu\nu\rho\sigma}u^\nu\nabla^\rho u^\sigma = \frac{1}{\sqrt{\beta^2}}w_\mu
\end{equation}
When both acceleration and vorticity four-vectors are non-vanishing, it is useful to introduce a fourth vector:
\begin{equation}\label{lvector}
    l^\mu = E^{\mu\nu\rho\sigma}u_\nu A_\rho\omega_\sigma \qquad l^2=-A^2\omega^2+(A\cdot\omega)^2
\end{equation}
in order to complete a tetrad $\{u,A,\omega,l\}$. The vectors of this tetrad are orthogonal to each other by construction, 
except $A$ and $\omega$. The tetrad can become degenerate when $A$ and $\omega$ have the same direction. In this case we 
have two preferred directions, identified by the four-velocity and the four-acceleration and the tensor structure in the 
2D orthogonal space-like subspace is isotropic, that is it can be described simply by $g^{\mu\nu}-u^\mu u^\nu-\hat{A}^\mu\hat{A}^\nu$.

In principle, the mean value of a local operator $\wO(x)$ is defined as $\Tr(\wrho \, \wO(x))$, but this is usually
divergent and needs renormalization. A convenient way of renormalizing it is to define a vacuum-subtracted 
expectation value as:
\be\label{vacsub}
   \langle \wO(x) \rangle \equiv \Tr(\wrho \, \wO(x)) - \bra{0} \wO(x) \ket{0}
\ee
which is finite and yet it depends on the choice of the vacuum state $\ket{0}$. Vacuum states are naturally associated to 
the lowest lying states of the operator in the exponent of eq. \eqref{densop}, that is:
\be\label{vacuum}
\ket{0}_\beta = \arg \min_{\ket{\Phi} \in {\cal H}} 
\left[\bra{\Phi} \int_{\Sigma}\di \Sigma_{\mu}\wT^{\mu\nu}\beta_\nu \ket{\Phi} \right] 
\ee
hence to the Killing vector $\beta$ as apparent from the subscript in $\ket{0}_\beta$; in eq. \eqref{vacuum} $\cal H$ is the Hilbert space of quantum states. The so-called Minkowski vacuum $\ket{0_M}$, for instance, is obtained by taking
as Killing vector $\beta = 1/T_0(1,{\bf 0})$ with $1/T_0$ constant. If the Killing vector in eq. \eqref{vacuum} is proportional 
to the one in the density operator \eqref{densop}, then the vacuum-subtracted value in \eqref{vacsub} vanishes in the 
vacuum state, otherwise it will not. An example of the latter case is when we take a Killing vector with non-vanishing derivative
in Minkowski space-time (see below) and retain $\ket{0_M}$ in the equation \eqref{vacsub}. In this case, we have for
quadratic operators in the free fields:
$$
\langle \wO(x) \rangle = \Tr(\wrho \, \wO(x)) - \bra{0_M} \wO(x) \ket{0_M} =  \Tr(\wrho \, :\wO(x):)
$$
where the colon stands for normal ordering. 

The vacuum-subtracted expectation value, whatever the vacuum state is, is a {\em functional} of the classical functions associated 
to the density operator \eqref{densop}, namely the metric tensor $g$, the Killing vector field $\beta$ and the field $n(x)$ normal
to the hypersurface $\Sigma$:
$$
\langle \wO(x) \rangle = \langle \wO(x) \rangle[\beta,g,n]
$$
At global thermodynamic equilibrium, as discussed above, the dependence on the hypersurface is dropped, hence
the above value will have a functional dependence on $\beta$ and $g$ only. If $\beta$ and $g$ are analytic functions in $x$, 
we can replace the functional dependence with a dependence on an infinite set of arguments, the values of the functions 
$\beta,g$ and all of their derivatives in $x$. This allows us to transform the functional dependence on a dependence on 
all the gradients. In symbols:
\begin{equation*}
    \langle \wO(x) \rangle =  \langle \wO(x) \rangle [\beta,g] =  \langle \wO(x) \rangle(\beta(x),\partial\beta(x),\partial\partial\beta(x),\ldots,g(x),\partial g(x),\partial\partial g(x),\ldots)
\end{equation*}
By suitably choosing inertial coordinates around the space-time point $x$, it is possible to make the first order derivatives 
of the metric tensor vanishing and express all the arguments above in terms of non-vanishing tensors and covariant derivatives 
thereof:
\begin{equation}\label{expectO}
    \langle \wO(x) \rangle =  \langle \wO(x) \rangle(\beta(x),\nabla\beta(x),
    g(x),R,\nabla R(x),\ldots)
\end{equation}
where $R$ stands for the Riemann tensor. At global equilibrium \eqref{killeq}, the higher orders covariant derivatives of the 
Killing vector field do not appear because of the well-known relation:
\begin{equation*}
    \nabla_\mu\nabla_\nu\beta_\sigma = R^\lambda_{\ \ \mu\nu\sigma}\beta_\lambda
\end{equation*}
%

\section{Universality conjecture and analytic distillation}
\label{sec:conj}

For a given microscopic quantum field theory, an important question is whether the function \eqref{expectO} on the right 
hand side is universal or rather depends on the specific space-time under consideration. Indeed, in principle, the expectation
value of a local operator depends on the density operator \eqref{densop} and the integral in the exponent, even if it was independent 
of the space-like hypersurface $\Sigma$, may well depend on non-local properties. Amongst them, the global geometric properties of the
space-time and the boundary conditions enforced to ensure the independence of the density operator on the hypersurface $\Sigma$.
In this paper, we will show that, for some important space-times and for the stress-energy tensor, the analytic part of the 
function \eqref{expectO} in $\nabla\beta,R,\nabla R,\ldots$ about $\nabla\beta,R,\nabla R,\ldots=0$ is the same independently 
of the space-time, whereas the terms depending on global properties and boundary conditions are non-analytic in those variables. 
We will prove it explicitly for special space-time such as Minkowski, anti-de Sitter, de Sitter and the Closed Einstein Universe. 

Such regularity will lead us to conjecture that this is a universal feature, i.e. applying to all local operators (not just
the stress-energy tensor), all quantum fields and all possible space-times \footnote{Incidentally, this conjecture implies
that the analytic part of the solutions found for global equilibrium with rotation in Minkowski space-time with different boundary 
conditions is universal; in other words, boundary conditions enforced on the fields at the radius $R < 1/\omega_0$, where $\omega_0$
is the angular velocity, only introduce non-analytic corrections in $\omega_0$.}.

One of the key points in the above statements is the definition of an analytic part. Loosely speaking, the analytic part
of a function is obtained by carrying out a power series expansion of its arguments, considered as complex variables, starting 
from some point. For instance, for the function \eqref{expectO}, expanding in all the arguments $\nabla\beta,R,\nabla R,\ldots$ starting 
from $\nabla\beta,R,\nabla R,\ldots = 0$. Schematically:
\be\label{coeff}
 \langle \wO(x) \rangle(\beta(x),\nabla\beta(x), g(x),R,\nabla R(x),\ldots) = C_0
 + C_1 \nabla \beta(x) + C_2 R(x) + C_3 \nabla \beta(x) \nabla \beta(x) + \ldots
\ee
where the dependence of all the coefficients $C_i$ on $\beta(x),g(x)$ is understood. Notably, the above expansion corresponds 
to what is commonly known as a gradient expansion in relativistic hydrodynamics \cite{Baier:2007ix,Bhattacharyya:2007vjd,Jensen:2012jh}, 
which is generally an asymptotic one \cite{Heller:2013fn, Heller:2015dha, Grozdanov:2019kge} in its arguments; we note in passing 
that, for a gradient expansion, linear terms in $\nabla\beta$ are first order while linear terms in $R$ are second order as they 
involve second-order derivatives of the metric tensor. The thermodynamic coefficients $C_i$ of course depend on the underlying quantum 
field theory; concerning their possible dependence on the space-time, it is either tacitly assumed that they do not depend \cite{Khakimov:2023emy} 
altogether or the question is simply ignored because the main goal of these calculations is to calculate coefficients in some
specific space-time. It is precisely this kind of dependence that we are tackling in this work.

\subsection{Analytic distillation}

A thorough definition of the analytic part of a function of a complex variable requires some care because it is known that the same
function may have different asymptotic power series in different sectors of the complex plane, which is the so-called 
Stokes phenomenon. In order to illustrate the relevance of this point, it is useful to present a specific example. 
Suppose we deal with a scalar operator $\wO(x)$ and, in a particular space-time we have a leading order asymptotic expansion of the kind:
$$
 \langle \wO(x) \rangle \sim a + b R(x) + \ldots
$$
$R(x)$ being the curvature scalar and $a,b$ numerical coefficients. If, in this particular space-time, curvature scalar $R(x)$ is 
definite positive, we may have an ambiguity in singling out the universal analytic part because the above expression could be obtained 
from a more general:
$$
 \langle \wO(x) \rangle \sim a + b_1 R(x) + b_2 | R(x) | \ldots
$$
with $b_1+b_2 = b$; in this case, the universal analytic part includes only the first term, while the second is manifestly non-analytic. 
Indeed, the second term can be discarded only by studying the asymptotic expansion of the function $\langle \wO(x) \rangle$ in different 
sectors of the complex plane considering $R(x)$ as a complex variable.

In ref. \cite{Becattini:2020qol}, the definition of an analytic part of a complex function was proposed under the name of 
{\em analytic distillate}.
\begin{definition}\label{distillate}
{\em Let $f(z)$ be a function on a domain $D$ of the complex plane and $z_0 \in \bar D$ a point 
where the function may not be analytic. Suppose that asymptotic\footnote{We denote asymptotic
equality with the symbol $\sim$.} power series of $f(z)$ in $z-z_0$ 
exist in subsets $D_i \subset D$ such that $\cup_i D_i = D$: 
$$
  f(z) \sim \sum_{n} a^{(i)}_n (z-z_0)^n 
$$
where $n$ can take integer negative values. If the series formed with the common coefficients 
in the various subsets restricted to $n \ge 0$ has a positive radius of convergence, the analytic 
function defined by this power series is called analytic distillate of $f(z)$ in $z_0$ and it 
is denoted by $\dist_{z_0} f(z)$.}
\end{definition}
For the function \eqref{expectO}, with $\nabla\beta,R,\nabla R,\ldots$ complexified variables, the above definition makes it 
precise what should be retained in the so-called gradient expansion. 

For reasons that will become clear later, it is convenient to extend the definition of distillation by replacing, in its 
formulation, asymptotic power series with a more general asymptotic representation of the function, which is known as a transseries. A 
transseries (the reader is referred to specialized literature) is an infinite sum of terms, multiplied by scalar coefficients, 
which are not just powers but more general expressions possibly including real powers, logarithms and exponentials, called 
transmonomials. For instance:
$$
z^\alpha \exp[-1/z^\beta] \log z     \qquad \alpha,\beta \in {\mathbb R}
$$
are possible transmonomials. The set of allowed transmonomials is called group of transmonomials and the minimal group
is $\{ z^n \} \; \; n \in {\mathbb N}$. The definition of a distillate is then modified as follows:
\begin{definition}\label{distillate2}
{\em Let $f(z)$ be a function on a domain $D$ of the complex plane and $z_0 \in \bar D$ a point 
where the function may not be analytic. Suppose that the function $f(z)$, for a given group of transmonomials
at least including integer powers, has a transseries expansion in $z-z_0$ in subsets $D_i \subset D$ such 
that $\cup_i D_i = D$. If the series formed with the common coefficients in the various subsets restricted 
to integer positive powers of $(z-z_0)^n$ has a positive radius of convergence, the analytic function defined 
by this power series is called analytic distillate of $f(z)$ in $z_0$ and it is denoted by $\dist_{z_0} f(z)$.}
\end{definition}
For a simple example of distillation, consider the function:
$$
   f(z) = z + \e^{-1/z}
$$
which is already in the form of a transseries. Following the definition \ref{distillate2}, we have:
$$
   \dist_{z=0} f(z) = z
$$
It is important to note that the distillate does not depend on the initial group of transmonomials used for
the asymptotic representation of the function. Indeed, if we allow a wider class of transmonomials, the additional 
terms would be cancelled by distillation anyway.

According to the definition \ref{distillate2}, distillation is a linear operation, that is:
$$
\dist_{z=z_0} \left[ \lambda_1 F_1(z) + \lambda_2 F_2(z) \right] = \lambda_1 \dist_{z=z_0} F_1(z) + \lambda_2 
\dist_{z=z_0} F_2(z)
$$
On the other hand, in general, the distillate of the product of two functions is not the product of distillates:
$$
 \dist_{z=z_0} F_1(z) F_2(z) \ne \dist_{z=z_0} F_1(z) \dist_{z=z_0} F_2(z)
$$
the simplest counter-example being $F_1(z)=1/z,F_2(z)=z$ in $z=0$. Nevertheless, under additional hypotheses the
equality holds.
\begin{theorem} \label{thm1}
{\em Let $F(z)$ be an analytic function in $z_0$ and $G(z)$ a complex valued function with transseries expansions
about $z_0$ and no pole in $z_0$. In this case:
$$
 \dist_{z=z_0} F(z) G(z) = \dist_{z=z_0} F(z) \, \dist_{z=z_0} G(z) = F(z) \dist_{z=z_0} G(z)
$$
}
\end{theorem}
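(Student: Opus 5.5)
The plan is to work directly with Definition \ref{distillate2} and to exploit the fact that an analytic $F$ has a single convergent Taylor series in every sector. Write
$$
F(z)=\sum_{n\ge0} f_n (z-z_0)^n,
$$
convergent in a disc $|z-z_0|<r$. In each subset $D_i$, write the transseries of $G$ as
$$
G(z)\sim \sum_{n\ge0} g_n^{(i)}(z-z_0)^n + N_i(z),
$$
where $N_i$ collects all transmonomials that are not non-negative integer powers. Two kinds of terms need care: negative integer powers, and anything like $e^{-1/z}$ or non-integer powers. Since $G$ has no pole in $z_0$, no negative integer powers appear in any sector. So $N_i$ consists only of terms that are not of the form $(z-z_0)^k$, $k\in\mathbb Z$.

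By definition, $\dist_{z_0}G$ is the series with the coefficients $g_n$ shared by all sectors. Equivalently, in each sector $g_n^{(i)}=g_n+h_n^{(i)}$, where the sector-dependent pieces $h_n^{(i)}$ are not common to all sectors. The first step is to form the product transseries in each $D_i$:
$$
F G \sim \sum_n \Big(\sum_{k\le n} f_k g^{(i)}_{n-k}\Big)(z-z_0)^n + F(z)N_i(z).
$$
Multiplying an asymptotic (trans)series by a convergent power series gives an asymptotic (trans)series of the product. This is the Cauchy product, and I would quote it as a standard property of transseries.

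The key step is to show that $F N_i$ contributes no integer power terms. A transmonomial multiplied by $(z-z_0)^k$ is again a non-integer-power transmonomial. For example, $z^\alpha\log z\cdot z^k$ has non-integer exponent $\alpha+k$ or still carries a log or exponential factor. An integer power could be produced only from a negative integer power, which is exactly what the no-pole hypothesis excludes. This is the point where the counterexample $1/z\cdot z$ is ruled out. Hence the integer-power coefficients of $FG$ in $D_i$ are the Cauchy products $c^{(i)}_n=\sum_k f_k g^{(i)}_{n-k}$.

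Next I identify the coefficients common to all sectors. Linearity gives $c^{(i)}_n=\sum_k f_k g_{n-k}+\sum_k f_k h^{(i)}_{n-k}$, and I must argue that the second sum carries no common part. This is where I expect the main obstacle: the notion of "common coefficients" has to be made precise as a decomposition, namely the projection onto the sector-independent part. The cleanest route is to treat $\dist$ as that projection and observe that it commutes with multiplication by the sector-independent analytic factor $F$. Concretely, if the $h^{(i)}$ sequences differ across sectors, then so do $F$ times them. If $\sum_k f_k h^{(i)}_{n-k}$ coincided for all $i$ up to order $n$, I would invert $F$ order by order, assuming $f_0\ne0$ or first factoring out $(z-z_0)^m$. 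That would force the $h^{(i)}$ to agree, a contradiction. The case $F\equiv0$ is trivial.

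The result is $\dist(FG)=\sum_n\big(\sum_k f_k g_{n-k}\big)(z-z_0)^n=F\cdot\dist G$. This series converges, because the product of two series with positive radius has radius at least the minimum of the two. Finally, $\dist F=F$ because $F$ is analytic with a unique series in every sector, which gives the middle equality.
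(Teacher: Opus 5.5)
Your main line is the paper's proof: Taylor-expand $F$, take the Cauchy product with the transseries of $G$, note that $(z-z_0)^k$ times a transmonomial that is not an integer power is still not an integer power, and use the no-pole hypothesis to exclude the $1/z\cdot z$ mechanism. The paper does exactly this, but with a \emph{single} expansion $G=\sum_\alpha b_\alpha (z-z_0)^\alpha+R(z-z_0)$ and no sector index.

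The gap is the step you add beyond this, where the integer-power coefficients of $G$ may depend on the sector, $g^{(i)}_n=g_n+h^{(i)}_n$. Inverting $F$ order by order shows only that if the sequences $h^{(i)}$ are not all equal, then neither are the $Fh^{(i)}$. It does not control a single order $n$, and a single order is what ``common coefficients'' refers to. Read coefficient-wise, that rule is not additive over your splitting $c^{(i)}_n=(Fg)_n+(Fh^{(i)})_n$. A common coefficient of $Fg$ can be masked by a sector-dependent term at the same order. Also, the $Fh^{(i)}$ can coincide at one order without coinciding below it. For a concrete failure, take $F=1+z$ and $G(z)=\sqrt{z^2}+z^2$ (principal branch, on $\mathbb{C}\setminus i\mathbb{R}$, the analogue of the paper's $|R|$ example). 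Then $G\sim z+z^2$ for $\mathrm{Re}\,z>0$ and $G\sim -z+z^2$ for $\mathrm{Re}\,z<0$, so $\dist_{z=0}G=z^2$. But $FG\sim z+2z^2+z^3$ and $FG\sim -z+z^3$, so $\dist_{z=0}(FG)=z^3\neq F\,\dist_{z=0}G$. If instead you intend a projection onto a ``sector-independent part'', that part is not determined by the sector-wise coefficients: the pair $z+2z^2+z^3$, $-z+z^3$ splits into common plus sector-dependent pieces in many ways. Then ``$Fh^{(i)}$ has no common part'' is a definitional choice, not something your inversion argument proves.

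The repair is to state explicitly the hypothesis the paper's proof uses tacitly. The integer-power coefficients of $G$ must be the same in every sector, with all sector dependence (Stokes phenomenon) carried by transmonomials that are not integer powers. Examples are the $I_{\arg z}/z$ terms of Corollary~\ref{coro1}, which become half-integer powers of $\kappa^2$ in the applications. Then $h^{(i)}\equiv 0$ and your inversion paragraph is unnecessary. The rest of your proposal, including the radius-of-convergence remark and $\dist F=F$, is then a complete proof coinciding with the paper's. Do not try to prove the sector-dependent case: under the coefficient-wise reading of Definition~\ref{distillate2}, the example above shows it is false.
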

\begin{proof}
Under the assumption of the theorem one can write:
$$
  F(z) = \sum_{n \ge 0} a_n (z-z_0)^n
$$
and 
$$
  G(z) = \sum_{\alpha \in {\mathbb{R-N^-}}} b_\alpha (z-z_0)^\alpha + R(z-z_0)
$$
where $R(z-z_0)$ is a sum of transmonomials including non-trivial combinations of log-exp transmonomials of $(z-z_0)$, i.e. including 
factors like $\exp[-c_k/(z-z_0)^k]$ or $\log(z-z_0)$ and powers; however, the expansion of $G(z)$ 
does not feature terms proportional to $(z-z_0)^{-n}$ with $n$ positive integer. 
According to the definition of distillate, we have:
$$
 \dist_{z=z_0} F(z) = F(z) \qquad \dist_{z=z_0} G(z) = \sum_{\alpha \in {\mathbb{N}}} b_\alpha (z-z_0)^\alpha
$$
and:
$$
\dist_{z=z_0} F(z) G(z) = \dist_{z=z_0} \left[ 
\sum_{n \ge 0 \; \alpha \in {\mathbb{R-N^-}}} a_n b_\alpha (z-z_0)^{n+\alpha}  + \sum_{n \ge 0} a_n (z-z_0)^n R(z-z_0) \right]
$$
The second series on the right hand side has vanishing distillate, while:
$$
\dist_{z=z_0} \left[ \sum_{n \ge 0 \; \alpha \in {\mathbb{R-N^-}}} a_n b_\alpha (z-z_0)^{n+\alpha} \right]
= \left[ \sum_{n \ge 0 \; \alpha \in {\mathbb{N}}} a_n b_\alpha (z-z_0)^{n+\alpha} \right] 
= \dist_{z=z_0} F(z) \dist_{z=z_0} G(z)
$$
which concludes the proof.
\end{proof}

\subsection{Asymptotics of harmonic series}
\label{harmseries}

Most functions that we will be dealing with are \textit{harmonic series}, defined as \cite{FLAJOLET19953}:
\begin{equation*}
    F(x) = \sum_{k=1}^{\infty}\lambda_kf(\mu_k x)
\end{equation*}
where the coefficients $\lambda_k$ are referred to as amplitudes, the $\mu_k > 0$ are called frequencies and the function 
$f(x)$ is denoted as base function. Every harmonic series has a corresponding Dirichlet series, defined as:
\begin{equation*}
 \Xi(s)=\sum_{k=1}^{\infty}\lambda_k\mu_k^{-s}   .
\end{equation*}
Dirichlet series have a half-plane of absolute convergence, for example $\text{Re}(s)>\sigma_{abs}$, and a half-plane of 
simple convergence, for example $\text{Re}(s)>\sigma_{simple}$ with $\sigma_{simple}\geq\sigma_{abs}$.

In the literature several methods are proposed to calculate the asymptotic expansion of harmonic series,
but all can be derived from the Mellin summation formula \cite{FLAJOLET19953}, which is based on the Mellin 
transform. Given a function $f(x)$ that is Lebesgue integrable on the real axis, its Mellin transform is 
defined as:
\begin{equation*}
    f^{*}(s)=\int_{0}^{\infty} \di x \; f(x) \, x^{s-1}
\end{equation*}
The largest open strip of the complex plane where this integral converges is called the fundamental strip of $f^{*}(s)$. 
If $f(x)$ is $\mathcal{O}(x^{-A})$ for $x\to0^{+}$ and $\mathcal{O}(x^{-B})$ for $x\to\infty$ with $B>A$ then the 
fundamental strip is the region $A<\text{Re}(s)<B$.
There are two main reasons why asymptotic methods based on the Mellin transform are well suited for harmonic series. 
The first one is the so-called "separation" property, while the second is the existence of a correspondence between 
the asymptotic expansion of $f(x)$ and the poles of $f^{*}(s)$. The separation property is the fact that the Mellin 
transform of harmonic series can be written as the product of the associated Dirichlet series and the transform of 
the base function:
\begin{equation*}
    F^{*}(s)=\Xi(s)f^{*}(s).
\end{equation*}
The correspondence can be understood with a simple example. Given $f^{*}(s)$ we can obtain $f(x)$ by inverting the 
transform:
\begin{equation*}
    f(x) = \frac{1}{2\pi i}\int_{c-i\infty}^{c+i\infty} \di s \; f^{*}(s) \, x^{-s}
\end{equation*}
where ${\rm Re} s=c$ identifies a vertical line within the fundamental strip. Suppose $f^{*}(s)$ can be analytically 
continued outside the fundamental strip and that it has a pole for $s=-N$ with residue $f_{\rm res}$. We can evaluate the 
inverse transform by considering a counterclockwise rectangular contour delimited by the vertical lines 
$\text{Re}(s)=-(N+\epsilon)$ and $\text{Re}(s)=c$. If the integral on the horizontal lines at complex infinity vanishes 
then using the residue theorem we can write:
\begin{equation*}
    f(x) = f_{\rm res}x^{N} - \frac{1}{2\pi i}\int_{-N-\varepsilon-\ii \infty}^{-N-\varepsilon+\ii \infty} 
    \hspace{-0.7 cm} \di s \; 
    f^{*}(s) \, x^{-s} = f_{\rm res}x^{N} + \mathcal{O}(x^{N+\varepsilon})
\end{equation*}
By taking bigger values for $\epsilon$ one can include sub-leading contributions coming from the poles in the region $s<-N$. Eventually one can send $\epsilon\to\infty$ and obtain a full power-series asymptotic expansion in the limit $x\to 0^+$.

In order to present the main theorem on the asymptotic expansions of harmonic series we need to introduce the concept of 
functions of \textit{fast decrease} and of \textit{moderate growth}.

A function $f(s)$ is said to be of fast decrease in a closed strip of the complex plane $A\leq\text{Re}(s)\leq B$ if
\begin{equation*}
    f(s) = \mathcal{O}(|s|^{-\alpha})
\end{equation*}
for every $\alpha>0$ as $|s|\to\infty$.

A function $f(s)$ is said to be of moderate growth in a closed strip of the complex plane $A \leq \text{Re}(s) \leq B$ if 
\begin{equation*}
    f(s) = \mathcal{O}(|s|^{\alpha})
\end{equation*}
for some $\alpha>0$ as $|s|\to\infty$.

In practice these definitions tell us that a fast decrease function decays faster than any negative power of $|s|$ while a
moderate growth function increases at most as fast as a positive power of $|s|$. 
We can now state the Mellin summation formula \cite{FLAJOLET19953}:
\begin{theorem}\label{thmfl}
Let $F(x)$ be an harmonic series and a continuous function of the variable $x$ on the real line $(0,\infty)$. If:
\begin{enumerate}[label=\roman*)]
    \item there is a non-empty intersection between the fundamental strip of the Mellin transform of the base function
    $f^{*}(s)$ and the half-plane of absolute convergence of $\Xi(s)$,
    \item $f^{*}(s)$ is meromorphic in $\mathbb{C}$ and of \textit{fast decrease}, 
    \item $\Xi(s)$ is meromorphic in $\mathbb{C}$ and of \textit{moderate growth},
\end{enumerate}
then $F(x)$ admits an asymptotic expansion for $x\rightarrow0^{+}$ that has the following form:
\begin{equation*}
    F(x)\sim\sum_{\text{Re} (s)<c}\text{Res}\left(f^{*}(s)\Xi(s)x^{-s}\right),
\end{equation*}
where $c$ lies in the intersection between the fundamental strip of $f^{*}(s)$ and the half-plane of absolute 
convergence of $\Xi(s)$.
\end{theorem}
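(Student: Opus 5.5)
The plan is the standard Mellin-transform argument: use the separation property to write $F^*(s)=\Xi(s)f^*(s)$, invert along a vertical line inside the common strip, and then shift the contour to the left, collecting residues. Fix $c$ in the intersection of the fundamental strip of $f^*$ with the half-plane of absolute convergence of $\Xi$; this intersection is nonempty by hypothesis i). First I will justify the separation property at $\text{Re}(s)=c$. For $\sum_k \lambda_k \int_0^\infty |f(\mu_k x)|x^{c-1}\di x$ the change of variables gives $\sum_k|\lambda_k|\mu_k^{-c}\int_0^\infty|f(y)|y^{c-1}\di y<\infty$. Fubini then allows exchanging sum and integral, so $F^*(s)=\Xi(s)f^*(s)$ on the line $\text{Re}(s)=c$, and in fact on the whole open intersection strip.

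Next I will invert the transform. Since $F$ is continuous on $(0,\infty)$ and $F^*$ is integrable on $\text{Re}(s)=c$ (fast decrease of $f^*$ times boundedness of $\Xi$ on a line of absolute convergence), the Mellin inversion formula holds pointwise:
$$
F(x)=\frac{1}{2\pi i}\int_{c-i\infty}^{c+i\infty}\di s\;\Xi(s)f^*(s)\,x^{-s}.
$$

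Then I will shift the contour. Pick $c'<c$ such that no pole of $\Xi f^*$ lies on $\text{Re}(s)=c'$. This is possible for all but countably many $c'$, since the poles of a meromorphic function are isolated; in the strip $c'\le\text{Re}(s)\le c$ there are at most countably many poles, and I would assume finitely many in each finite strip, as in the standard statement. Integrate over the rectangle with vertices $c'\pm iT$, $c\pm iT$ and let $T\to\infty$ along a sequence avoiding poles. On the horizontal segments the product $\Xi f^*$ is $\mathcal{O}(|s|^{\alpha})\,\mathcal{O}(|s|^{-\alpha-2})$ by hypotheses ii) and iii), uniformly in the closed strip, while $|x^{-s}|\le \max(x^{-c},x^{-c'})$. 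The horizontal contributions therefore vanish. The residue theorem then gives
$$
F(x)=\sum_{c'<\text{Re}(s)<c}\text{Res}\left(\Xi(s)f^*(s)x^{-s}\right)+\frac{1}{2\pi i}\int_{c'-i\infty}^{c'+i\infty}\di s\;\Xi(s)f^*(s)\,x^{-s}.
$$
The remaining integral is bounded by $x^{-c'}\int|\Xi f^*|\,|\di s|=\mathcal{O}(x^{-c'})$, and this bound is finite again by moderate growth times fast decrease.

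Finally I will let $c'\to-\infty$. Each residue at a pole $s_0$ of order $m$ has the form $x^{-s_0}$ times a polynomial in $\log x$ of degree at most $m-1$, which is the transmonomial structure needed later. The error term $\mathcal{O}(x^{-c'})$ is of smaller order than every retained term with $\text{Re}(s_0)>c'$ as $x\to0^+$. Hence the sum of residues is an asymptotic expansion in the sense of Poincaré, which is the claim. The main obstacle is making the growth estimates uniform in the strip $c'\le\text{Re}(s)\le c$, as opposed to pointwise, so that the horizontal integrals vanish and the shifted line integral converges. I will read the fast-decrease and moderate-growth hypotheses as uniform bounds on each closed strip, which is how they are meant in \cite{FLAJOLET19953}. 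I also need the local finiteness of the poles in each strip, which meromorphy supplies.
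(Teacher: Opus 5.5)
Your proposal is correct and takes the same route as the paper. The paper does not prove the theorem itself: it cites the reference and, just before the statement, sketches the same argument, namely Mellin inversion along a line in the common strip, a leftward contour shift with vanishing horizontal contributions, and collection of residues.

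What you add is the rigour the sketch leaves out:
\begin{itemize}
\item Fubini justifies the separation property $F^*(s)=\Xi(s)f^*(s)$.
\item Mellin inversion follows from continuity of $F$ together with integrability of $F^*$ on $\text{Re}(s)=c$.
\item Uniform strip bounds make the horizontal segments vanish and give the $\mathcal{O}(x^{-c'})$ remainder.
\end{itemize}
You read hypotheses ii) and iii) as uniform bounds on closed strips and assume finitely many poles in each finite strip. That is precisely the implicit assumption behind the informal statement, so these are not gaps in your argument.
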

From this general theorem two corollaries follow relating the asymptotic expansion of special harmonic series, with
$\lambda_k = k$ to the asymptotic expansion of the base function.
These results have been proved in ref. \cite{ZagierAppendixTM} for functions of real variables, and later discussed and
extended to complex functions in ref. \cite{Becattini:2020qol} and \cite{Palermo:2021hlf} in the context of analytic
distillation.
\begin{corollary}\label{coro1}
{\em Let $F$ be a $C^\infty$ complex valued function in a domain of the complex plane and suppose
that $F$ has the following asymptotic power series in $z=0$:
$$
  F(z) \sim \sum_{k=-M}^\infty A_k z^k
$$
with $M$ a positive integer and such that $F(z)-A_{-1}/z = \mathcal{O}(1/|z|^{1+\varepsilon})$ for $|z| \to +\infty$ 
with $\varepsilon > 0$. Then, the function defined by the series:
\begin{equation*}
   G(z) = \sum_{n=1}^\infty F(nz)
\end{equation*}
has the asymptotic expansion for $|z| \to 0^+$:
\begin{equation*}
   G(z) \sim \frac{1}{z}\left[ - A_{-1} (\log z - \ii \arg z) + I_{\arg z} \right] 
   + \sum_{\substack{n=-M \\ n \ne -1}}^\infty A_n \zeta(-n) z^n ,
\end{equation*}  
where
\begin{equation*}
  I_{\arg z} = \int_\Gamma \di w \; \left( F(w)-\sum_{n=-M}^{-2} A_n w^n - A_{-1} \frac{\e^{-w\exp[-\ii \arg z]}}{w} \right).
\end{equation*}
and $\Gamma$ is the line in the complex plane from $|z|=0$ to $+\infty$ with angle $\arg z$.}
\end{corollary}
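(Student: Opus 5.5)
The plan is to derive the corollary from the Mellin summation formula, Theorem \ref{thmfl}, applied to the harmonic series $G$ with all amplitudes and frequencies equal to one, $\lambda_k=1$ and $\mu_k=k$. The associated Dirichlet series is then the Riemann zeta function, $\Xi(s)=\zeta(s)$. It is absolutely convergent for $\text{Re}\, s>1$, meromorphic in $\mathbb{C}$ with a single simple pole at $s=1$ of residue $1$, and of moderate growth in vertical strips; this settles condition iii). I first treat $z=x>0$ real and then rotate.

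For real $x$, I would subtract the singular part of $F$ to get a workable base function. Set
$$
f(w)=F(w)-\sum_{n=-M}^{-2}A_n w^n - A_{-1}\frac{\e^{-w}}{w}.
$$
This $f$ is $\mathcal{O}(1)$ as $w\to0^+$. By the hypothesis $F-A_{-1}/w=\mathcal{O}(w^{-1-\varepsilon})$, and since the other subtracted terms decay faster than $w^{-1-\varepsilon}$ if we restrict to the powers with $n\le-2$, we get $f=\mathcal{O}(w^{-1-\varepsilon'})$ at infinity. Hence $f^*(s)$ has fundamental strip $0<\text{Re}\, s<1+\varepsilon'$, which overlaps the half-plane $\text{Re}\,s>1$ of absolute convergence of $\zeta$; this settles condition i).

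The subtracted pieces are handled separately:
\begin{itemize}
\item For $n\le-2$, each power term gives $\sum_k A_n (kx)^n = A_n\zeta(-n)x^n$ exactly.
\item The exponential term gives $A_{-1}\sum_k \e^{-kx}/(kx) = -A_{-1}\log(1-\e^{-x})/x$. Expanding, this equals $\frac1x\left[-A_{-1}\log x\right]$ plus a power series whose coefficients combine with the residues computed below.
\end{itemize}

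Next I would compute the residues of $f^*(s)\zeta(s)x^{-s}$ for $\text{Re}\, s<c$:
\begin{itemize}
\item At $s=1$, the pole of $\zeta$ contributes $f^*(1)/x$. Here $f^*(1)=\int_0^\infty f\,\di w$ is exactly $I_0$.
\item The poles of $f^*$ at $s=-n$, $n\ge0$, have residues equal to the Taylor coefficients of $f$ at $0$. These come from $A_n$ for $n\ge0$ together with the exponential's expansion, and they yield $A_n\zeta(-n)x^n$ after recombination with the $\log(1-\e^{-x})$ series.
\end{itemize}
The main obstacle is condition ii), fast decrease of $f^*$ on vertical lines. This requires smoothness and control of all derivatives of $f$, which I would take from the $C^\infty$ hypothesis together with the assumption that the asymptotic expansion can be differentiated termwise. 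If that fails, I would bypass Mellin and use Euler--Maclaurin directly with the Zagier-type argument of ref. \cite{ZagierAppendixTM}; this gives the same coefficients, with the zeta values arising from the Bernoulli numbers.

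Finally, for complex $z=|z|\e^{\ii\theta}$, I would write $F(nz)=\tilde F(n|z|)$ with $\tilde F(r)=F(r\e^{\ii\theta})$, and apply the real result to $\tilde F$. Its coefficients are $\tilde A_k=A_k\e^{\ii k\theta}$, so $\tilde A_k|z|^k=A_k z^k$. The $1/z$ term becomes $\tilde A_{-1}\log|z|/|z|=A_{-1}(\log z-\ii\arg z)/z$. The integral $\int_0^\infty \tilde f\,\di r$ turns into the integral along the ray $\Gamma$ of angle $\arg z$, with the factor $\e^{-r}$ becoming $\e^{-w\exp[-\ii\arg z]}$. This reproduces $I_{\arg z}$ and completes the statement.
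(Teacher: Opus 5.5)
\textbf{The gap.} The step that fails is your claim that $f=\mathcal{O}(w^{-1-\varepsilon'})$ at infinity. Write $f=(F-A_{-1}/w)+A_{-1}(1-\e^{-w})/w-\sum_{n=-M}^{-2}A_nw^n$. The middle term behaves like $A_{-1}/w$. Subtracting $A_{-1}\e^{-w}/w$ removes the $1/w$ singularity at the origin, but that term is exponentially small at infinity and cannot cancel a $1/w$ tail there.

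So, from the hypothesis as literally stated and with $A_{-1}\neq 0$, several things break. The fundamental strip of $f^*$ is only $0<\mathrm{Re}\,s<1$, so condition i) of Theorem \ref{thmfl} fails. The residue term $f^*(1)=I_0$ diverges logarithmically. Worse, $G(z)=\sum_n F(nz)$ itself diverges, since $F(nz)\simeq A_{-1}/(nz)$ for large $n$.

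The statement only makes sense with the decay hypothesis $F(z)=\mathcal{O}(|z|^{-1-\varepsilon})$. That is the condition needed for $G$ to converge, and the functions to which the paper applies the corollary meet it with exponential decay. You should have made this correction explicit rather than deriving the bound from the stated condition. With it, $f=\mathcal{O}(w^{-1-\min(\varepsilon,1)})$ and your argument goes through.

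\textbf{The rest of the argument.} Apart from this, your route is the one the paper relies on. The paper gives no proof of its own; it cites the Mellin summation formula and Zagier's real-variable result, and refers to \cite{Becattini:2020qol,Palermo:2021hlf} for the extension to complex arguments, which your rotation $\tilde F(r)=F(r\e^{\ii\theta})$ supplies.

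Your bookkeeping checks out:
\begin{itemize}
\item The exponential piece has Mellin transform $\zeta(s)\Gamma(s-1)$. Its double pole at $s=1$ gives $-A_{-1}\log x/x$ with no Euler constant.
\item Its poles at $s=-n$ give $A_{-1}\frac{(-1)^{n+1}}{(n+1)!}\zeta(-n)x^n$. These exactly compensate the shift of the Taylor coefficients of $f$ away from $A_n$.
\item The substitution $w=r\e^{\ii\theta}$ turns $\tilde I_0/|z|$ into $I_{\arg z}/z$, with $\e^{-r}\to\e^{-w\exp[-\ii\arg z]}$.
\end{itemize}

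Two minor technical points remain. First, Theorem \ref{thmfl} as quoted requires $f^*$ to be meromorphic in all of $\mathbb{C}$, which power-law decay does not give to the right of the fundamental strip. You need the version of the Mellin asymptotic theorem that only requires continuation to the left, which is all that matters for $x\to0^+$. Second, fast decrease of $f^*$ genuinely requires control of the derivatives of $F$ along the ray, both at $0$ and at infinity. The corollary does not state this, so your caveat there is an extra hypothesis rather than a consequence of $C^\infty$.
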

According to the definition \ref{distillate2}, the distillate of the function $G(z)$ in the above corollary is:
$$
  \dist_{z=0} G(z) = \sum_{n=0}^\infty A_n \zeta(-n) z^n
$$
Note that the integral $I_{\arg z}$ for $\arg z = 0$ reads:
$$
  I_{0} = \int_0^{\infty} \di x \; \left( F(x)-\sum_{n=-M}^{-2} A_n x^n - A_{-1} \frac{\e^{-x}}{x} \right).
$$
It is worth noting that if $A_{-1}=0$ and the integrand function of $I_{\arg z}$ has no poles in the 
sector between $\arg z=0$ and $\arg z = \pi$, then $I_{\arg z} = I_0$ according to the assumed asymptotic 
conditions for $|z| \to +\infty$ for any $0 \leq \arg z \leq \pi$. If, on the other hand, the integrand 
function has a pole, the integral will differ as soon as $\arg z$ exceeds the argument of the pole, giving rise 
to a different integral coefficient in the asymptotic expansion (i.e. the Stokes phenomenon).
\begin{corollary}\label{coro2}
{\em Let $F(z)$ be a $C^\infty$ complex valued function in a domain of the complex plane and suppose
that $F$ has the following asymptotic power series in $z=0$ 
$$
  F(z) \sim \sum_{k=-M}^\infty A_k z^k
$$
with $M$ a positive integer and let $F$ be $o(1/|z|)$ when $|z| \to +\infty$ in the real axis.
Let $G(z)$ be the function defined by the series:
\begin{equation}\label{G}
   G(z) = \sum_{n=1}^\infty (-1)^{n+1} F(nz).
\end{equation}
The asymptotic power series of $G(z)$ for $|z| \to 0^+$, is given by:
\begin{equation}
\label{eq:AsymptG}
   G(z) \sim \sum_{n=-M}^\infty A_n \eta(-n) z^n ,
\end{equation}
where $\eta$ is the Dirichlet function:
\begin{equation*}
\eta(s)=\sum_{k=1}^{\infty}\frac{(-1)^{k+1}}{k^s}=(1-2^{1-s})\zeta(s).
\end{equation*}}
\end{corollary}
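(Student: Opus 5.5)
The proof will go through the Mellin summation formula, Theorem \ref{thmfl}, applied to the harmonic series \eqref{G}. The amplitudes are $\lambda_k=(-1)^{k+1}$ and the frequencies are $\mu_k=k$, so the associated Dirichlet series is
$$
\Xi(s)=\sum_{k\ge1}(-1)^{k+1}k^{-s}=\eta(s)=(1-2^{1-s})\zeta(s).
$$
The key structural fact is that $\eta(s)$ is entire. The simple pole of $\zeta$ at $s=1$ is cancelled by the zero of $(1-2^{1-s})$ there. Its half-plane of absolute convergence is $\mathrm{Re}(s)>1$. It is of moderate growth in every vertical strip, by the standard convexity bounds on $\zeta$ together with the boundedness of $2^{1-s}$ on vertical lines.

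First I treat $z=x>0$ real.
\begin{itemize}
\item \textbf{Fundamental strip.} The base function $F$ behaves like $A_{-M}x^{-M}$ at $0^+$ and is $o(1/x)$ at infinity. Its Mellin transform therefore converges in a strip $M<\mathrm{Re}(s)<1+\delta$, after possibly subtracting finitely many singular terms.
\item \textbf{Handling the overlap condition.} To meet hypothesis i), where the strip must reach $\mathrm{Re}(s)>1$, I split $F=F_{\rm sing}+F_{\rm reg}$. Here $F_{\rm sing}(x)=\sum_{k=-M}^{-1}A_k x^k \e^{-x}$, or an equivalent cut-off version. For the pure powers the alternating sum is exact: $\sum_n(-1)^{n+1}(nx)^k=\eta(-k)x^k$, understood by Abel summation or directly through the Mellin transform of $x^k\e^{-x}$, namely $\Gamma(s+k)$. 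For the regular part the fundamental strip contains a piece of $\mathrm{Re}(s)>1$.
\item \textbf{Residue computation.} The poles of $f^*(s)$ to the left sit at $s=-n$, with residues $A_n$; this is the standard Mellin correspondence between the asymptotic expansion at $0$ and the poles. Because $\eta$ is entire, the residues of $f^*(s)\eta(s)x^{-s}$ at $s=-n$ are exactly $A_n\eta(-n)x^n$. No extra pole at $s=1$ appears, so no logarithm or integral term like $I_{\arg z}$ of Corollary \ref{coro1} arises. Summing the residues gives \eqref{eq:AsymptG}.
\end{itemize}

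Second, I extend to complex $z=|z|\e^{\ii\theta}$. I set $\tilde F(r)=F(r\e^{\ii\theta})$ and apply the real-variable argument to $\tilde F$ in $r=|z|$. Its expansion is $\sum_k A_k\e^{\ii k\theta}r^k$, which recombines into the same series in $z$. The coefficients $A_n\eta(-n)$ are the same in every direction, which is what the distillation will later need.

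The main obstacle is verifying the technical hypotheses of Theorem \ref{thmfl}, in particular that $f^*$ is meromorphic on all of $\mathbb{C}$ and of fast decrease on vertical lines. The stated assumptions, smoothness plus an asymptotic series at $0$ and $o(1/|z|)$ decay, do not obviously give fast decrease at large $|\mathrm{Im}\, s|$. My first attempt would be to avoid that requirement altogether. I would write $\eta$ through $\zeta$, so that
$$
G(x)=\sum_{n\ge1}F(nx)-2\sum_{n\ge1}F(2nx),
$$
and apply Corollary \ref{coro1} to each sum. The $1/x$ terms then cancel: $\frac{1}{x}[\ldots]-2\cdot\frac{1}{2x}[\ldots]$ with identical bracket. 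The logarithms also cancel, up to $A_{-1}\log 2/x$ pieces that combine to give $A_{-1}\eta(1)/x$ with $\eta(1)=\log 2$. The power terms combine as $\zeta(-n)(1-2^{1+n})=\eta(-n)$. The remaining work is checking the decay hypothesis of Corollary \ref{coro1} at infinity, which in the alternating case I would relax by pairing consecutive terms.
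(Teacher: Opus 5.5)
Your main route is the one the paper relies on. The paper gives no argument beyond obtaining the corollary from Theorem~\ref{thmfl} with $\Xi(s)=\eta(s)$, citing Zagier for real arguments and the earlier distillation papers for the extension to complex $z$. Your key observation is exactly the point: $\eta$ is entire, so only the poles of $f^*$ at $s=-n$ contribute, and neither the $I_{\arg z}/z$ term nor the logarithm of Corollary~\ref{coro1} can appear. Your cross-check through $\sum_n F(nx)-2\sum_n F(2nx)$ is also algebraically right: the integral terms cancel, the logarithms leave $A_{-1}\log 2/x=A_{-1}\eta(1)/x$, and $\zeta(-n)(1-2^{1+n})=\eta(-n)$.

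The problem is the step you flagged yourself, and it cannot be closed under the stated hypotheses.
\begin{itemize}
\item With only $F=o(1/x)$, the guaranteed fundamental strip of $F_{\rm reg}$ is $0<\mathrm{Re}\,s<1$. This misses the half-plane $\mathrm{Re}\,s>1$ where $\eta$ converges absolutely; your ``$1+\delta$'' tacitly assumes $F=\mathcal{O}(x^{-1-\delta})$.
\item Fast decrease of $f^*$ comes from repeated integration by parts, which needs bounds on the derivatives of $F$ that are not assumed either.
\item Your fallbacks inherit the same problem. Corollary~\ref{coro1} requires an $\mathcal{O}(|z|^{-1-\varepsilon})$ remainder at infinity. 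Under $o(1/x)$ the two non-alternating sums can each diverge (e.g.\ $F\sim 1/(x\log x)$). Pairing $F((2m-1)x)-F(2mx)$ needs control of $F'$.
\end{itemize}
In fact the corollary is false as literally stated. Take $F(y)=1/y$ near $0$ and $F(y)=g(y)/(y\log y)$ for large $y$, with $g(y)=\sum_{i\ge1}2^{-i^2}\cos(\pi 2^i y)$. At $x=2^{-j}$ the $i=j$ mode contributes the constant-sign terms $-2^{-j^2}/(nx\log nx)$. The remaining modes give a zero-mean periodic sequence times a monotone factor, which converges. So $G(2^{-j})$ diverges for every $j$, and $G$ is undefined arbitrarily close to $0$.

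The paper's appeal to Theorem~\ref{thmfl} silently uses stronger assumptions, and so should you. Require, as Zagier does, that $F$ and its derivatives decay faster than any power along each ray $\arg z=\theta$ you use (not just the real axis), and that the expansion at $0$ can be differentiated termwise. All base functions in the paper, e.g.\ $x^m\e^{\ii\Phi x}/(\e^x-1)$, satisfy this. Under these assumptions your Mellin argument is complete, and only the continuation of $f^*$ to the left is needed, which the expansion at $0$ supplies.

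One small repair is also needed. Subtracting $\sum_{k=-M}^{-1}A_kx^k\e^{-x}$ does not remove the singular part when $M\ge2$, because $x^{-M}\e^{-x}$ reintroduces $x^{-M+1},\dots,x^{-1}$. Use a smooth cut-off $\chi(x)\sum_{k<0}A_kx^k$, whose Mellin transform has only the poles $s=-k$, or solve the triangular system for adjusted coefficients.
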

Note that both the Riemann Zeta function and the Dirichlet Eta function vanish when evaluated on negative even integers. 
When the asymptotic series of the base function contains only even powers, so that $A_{2k+1}=0$, 
the asymptotic expansion of the complete harmonic series will contain a finite number of terms, \textit{i.e.} 
it will be a polynomial.

\section{Stress-energy tensor of a massless scalar field in Minkowski space-time}
\label{sec:mink}

For a real massless free scalar field at global equilibrium, all the coefficients $C_i$ in eq. \eqref{coeff} up to second order 
in the gradient expansion of the stress-energy tensor have been calculated using functional methods \cite{Moore:2012tc} or, 
in flat space-time, the density operator method \cite{Buzzegoli:2017cqy}. It has been shown that these coefficients can be 
written in terms of just two quantities, that appear at the level of the equilibrium generating functional, called thermodynamic 
susceptibilities \cite{Kovtun:2018dvd}. Finally, in ref. \cite{Becattini:2020qol} for the scalar field and in ref. 
\cite{Palermo:2021hlf} for the Dirac field, the exact expression of the stress-energy tensor (both canonical and improved) 
normalized with subtraction of the Minkowski vacuum, was obtained in flat space-time and shown to correspond to a truncated
power series in $\nabla \beta$ at the fourth order. This result was obtained by means of analytic distillation and continuation
from imaginary acceleration and angular velocity. In the case of pure acceleration the results precisely match the exact results 
obtained by solving the field equations in curvilinear coordinates whereas in the case of pure rotation, the obtained expression
refers to floating boundary condition, which makes the exponent of the density operator \eqref{densop} unbounded from below.
In the case of global equilibrium in the presence of both acceleration and vorticity the result for the improved stress-energy 
tensor reads \cite{Becattini:2020qol}:
\begin{equation*}
    \langle:\wT^{\mu\nu}:\rangle=\rho' u^{\mu}u^{\nu}-p'\Delta^{\mu\nu}+W'\omega^{\mu}\omega^{\nu}+\mathcal{A}'A^{\mu}A^{\nu}+G^{l}l^{\mu}l^{\nu}+G'(u^{\mu}l^{\nu}+u^{\nu}l^{\mu})
\end{equation*}
where $\Delta^{\mu\nu} = g^{\mu\nu} - u^\mu u^\nu$ and:
\begin{equation*}
\begin{split}
    &\rho'=\frac{\pi^2}{30\beta^4}-\frac{1}{36\beta^2}\omega^2-\frac{1}{144\pi^2}\omega^4-\frac{1}{480\pi^2}A^4-\frac{1}{45\pi^2}A^2\omega^2, \\
    &p'=\frac{\pi^2}{90\beta^4}-\frac{1}{36\beta^2}\omega^2-\frac{1}{144\pi^2}\omega^4-\frac{1}{1440\pi^2}A^4,\\
    &W'=-\frac{1}{18\beta^2}-\frac{1}{72\pi^2}\omega^2-\frac{1}{40\pi^2}A^2,\\
    &\mathcal{A}'=\frac{1}{360\pi^2}\omega^2,\\
    &G^{l}=-\frac{1}{1440\pi^2},\\
    &G'=\frac{1}{18\beta^2}-\frac{1}{360\pi^2}\omega^2-\frac{1}{360\pi^2}A^2.
\end{split}
\end{equation*}
For the canonical stress-energy tensor, which corresponds to minimal coupling to gravity in curved space-time, the tensor 
structure is the same and the coefficients are given by:
\begin{equation*}
\begin{split}
    & \rho'=\frac{\pi^2}{30\beta^4}-\frac{1}{12\beta^2}\omega^2-\frac{1}{12\beta^2}A^2-\frac{1}{48\pi^2}\omega^4-
    \frac{11}{480\pi^2}A^4-\frac{61}{720\pi^2}A^2\omega^2,\\
    &p'=\frac{\pi^2}{90\beta^4}+\frac{1}{18\beta^2}A^2+\frac{19}{1440\pi^2}A^4,\\
    &W'=-\frac{1}{12\beta^2}-\frac{1}{48\pi^2}\omega^2-\frac{29}{360\pi^2}A^2,\\
    &\mathcal{A}'=\frac{1}{12\beta^2}+\frac{1}{360\pi^2}\omega^2+\frac{1}{48\pi^2}A^2,\\
    &G^{l}=\frac{1}{240\pi^2},\\
    &G'=\frac{1}{36\beta^2}-\frac{1}{240\pi^2}\omega^2-\frac{7}{720\pi^2}A^2.
\end{split}
\end{equation*}
In the derivation of refs. \cite{Becattini:2020qol,Palermo:2021hlf} the coefficient $G^l$ is indeed redundant
as $l^\mu l^\nu$ can be expressed as a linear combination of the metric tensor $g$ and the symmetric tensor $A^\mu A^\nu$
and $\omega^\mu \omega^\nu$:
\begin{equation}
    l^\mu l^\nu= -A^2\omega^2\Delta^{\mu\nu}-\omega^2A^\mu A^\nu-A^2\omega^\mu\omega^\nu
    \label{decomp}
\end{equation}
Therefore, the two expressions above can be recast as:
\begin{equation}\label{minkset}
    \langle:\wT^{\mu\nu}:\rangle=\rho u^{\mu}u^{\nu}-p\Delta^{\mu\nu}+W\omega^{\mu}\omega^{\nu}+\mathcal{A}A^{\mu}A^{\nu}+G(u^{\mu}l^{\nu}+u^{\nu}l^{\mu})
\end{equation}
with the following coefficients for $\xi=1/6$:
\begin{equation}\label{minkfunc}
\begin{split}
    &\rho=\frac{\pi^2}{30\beta^4}-\frac{1}{36\beta^2}\omega^2-\frac{1}{144\pi^2}\omega^4-\frac{1}{480\pi^2}A^4-\frac{1}{45\pi^2}A^2\omega^2, \\
    &p=\frac{\pi^2}{90\beta^4}-\frac{1}{36\beta^2}\omega^2-\frac{1}{144\pi^2}\omega^4-\frac{1}{1440\pi^2}A^4-\frac{1}{1440\pi^2}A^2\omega^2,\\
    &W=-\frac{1}{18\beta^2}-\frac{1}{72\pi^2}\omega^2-\frac{7}{288\pi^2}A^2,\\
    &\mathcal{A}=\frac{1}{480\pi^2}\omega^2,\\
    &G=\frac{1}{18\beta^2}-\frac{1}{360\pi^2}\omega^2-\frac{1}{360\pi^2}A^2.
\end{split}
\end{equation}
and for $\xi=0$:
\begin{equation}\label{minkfunc2}
\begin{split}
    & \rho=\frac{\pi^2}{30\beta^4}-\frac{1}{12\beta^2}\omega^2-\frac{1}{12\beta^2}A^2-\frac{1}{48\pi^2}
    \omega^4-\frac{11}{480\pi^2}A^4-\frac{61}{720\pi^2}A^2\omega^2,\\
    &p=\frac{\pi^2}{90\beta^4}+\frac{1}{18\beta^2}A^2+\frac{19}{1440\pi^2}A^4+\frac{1}{240\pi^2}A^2\omega^2,\\
    &W=-\frac{1}{12\beta^2}-\frac{1}{48\pi^2}\omega^2-\frac{61}{720\pi^2}A^2,\\
    &\mathcal{A}=\frac{1}{12\beta^2}-\frac{1}{720\pi^2}\omega^2+\frac{1}{48\pi^2}A^2,\\
    &G=\frac{1}{36\beta^2}-\frac{1}{240\pi^2}\omega^2-\frac{7}{720\pi^2}A^2.
\end{split}
\end{equation}
These expressions are, however, incomplete as they miss terms depending on $A \cdot \omega$, which have been
shown to appear in the full expression of the stress-energy tensor of the massless Dirac free field \cite{Palermo:2021hlf}.
Hence, the equation \eqref{minkset} applies only if $ A \cdot \omega = 0$, as well as the decomposition \eqref{decomp}.

The universality conjecture discussed in Section \ref{sec:conj} states that the coefficients in \eqref{minkset} are the
same for the analytic distillate of the stress-energy tensor of the same quantum field in different space-times. 
We will show that this is the case for three different space-times, as has been mentioned: anti-de Sitter, de Sitter 
and the Closed Einstein Universe, for which exact solutions are available for the stress-energy tensor at global thermodynamic
equilibrium.

\section{Stress-energy tensor in Anti-de Sitter space-time}
\label{sec:ads}

\subsection{Anti-de Sitter geometry}

Anti-de Sitter space-time, AdS for short, is a negative constant curvature space-time and it is a solution of the Einstein field equation 
with a negative cosmological constant. It is maximally symmetric, namely it possesses the maximum allowed number of independent Killing 
vectors, ten in four dimensions. AdS can be defined as a four-dimensional hypersurface embedded in a five-dimensional flat 
space-time with signature $(+,-,-,-,+)$ with equation:
\begin{equation}\label{adsembedding}
    X_{0}^{2}-X_{1}^{2}-X_{2}^{2}-X_{3}^{2}+X_{4}^{2}=-a^{2},
\end{equation}
where $a > 0$ is referred to as the AdS radius; the inverse of the AdS radius is the curvature $\kappa=1/a$. 
The topology of AdS is given by $S^{1}\times \mathbb{R}^{3}$. The embedding coordinates can be parametrized as:
\begin{equation}
    \begin{aligned}
    X_{0}&=a\cos{\tau}\sec{\rho}, \  
    X_{1}=a\tan{\rho}\cos{\theta}, \
    X_{2}=a\tan{\rho}\cos{\varphi}\sin{\theta}, \\
    X_{3}&=a\tan{\rho}\sin{\varphi}\sin{\theta}, \
    X_{4}=a\sin{\tau}\sec{\rho},
    \end{aligned}
    \label{embedding-AdS}
\end{equation}
where $-\pi<\tau\leq\pi$, $0\leq\rho<\frac{\pi}{2}$, $0\leq\theta\leq\pi$ and $0\leq\phi\leq2\pi$. In this set of coordinates 
the metric is given by:
\begin{equation}
    \di s^{2}=a^{2}\sec^{2}{\rho}\left[\di \tau^{2}-\di \rho^{2}-\sin^{2}{\rho}(\di \theta^{2}+
    \sin^{2}{\theta}\, \di \varphi^{2})\right].
    \label{AdSmetric1}
\end{equation}
The periodicity of the time coordinate $\tau$ signals the presence of closed timelike curves. These curves are not physical 
and have to be removed through the universal cover of AdS. This corresponds to considering consecutive copies of AdS, allowing 
the time coordinate to vary on the entire real axis. $S^{1}$ is then replaced by $\mathbb{R}^1$ so the topology of the universal 
covering is $\mathbb{R}^4$. From now on we will always be referring to the universal cover. 

AdS is not globally hyperbolic because of the presence of a timelike boundary, located at $\rho=\pi/2$. Light rays manage to 
reach the boundary in a finite time, implying that information can be gained or lost through spatial infinity. Even
though not globally hyperbolic, AdS can be foliated into spacelike hypersurfaces, so the thermodynamic problem with the density
operator \eqref{densop} is well posed.

We can introduce a new set of coordinates by setting $t=a\tau$ and $r=a\tan{\rho}$. The resulting line element can be written in 
terms of $\kappa \equiv 1/a$ as:
\begin{equation}\label{AdSmetric}
    \di s^{2} = (1+r^2\kappa^2)\di t^{2}-(1+r^2\kappa^2)^{-1}\di r^{2}-r^{2}\left[\di \theta^{2}+\sin^{2}{\theta}\di \varphi^{2}\right].
\end{equation}
and it is very useful because its limit for $\kappa \to 0$ is manifestly the Minkowski metric in spherical coordinates.

In AdS there is a globally timelike Killing vector, which in the above coordinates reads:
\begin{equation}\label{KillingAdS}
    \beta^{\mu}=\frac{1}{T_{0}}(1,0,0,0) \implies \beta^{2}=\frac{1}{T_{0}^{2}}(1+r^2\kappa^2)=\beta_{0}^{2}(1+r^2\kappa^2).
\end{equation}
where $T_0$ is the global temperature (see discussion in Section \ref{sec:gte}) and we defined:
$$
 \beta_0 \equiv \frac{1}{T_0}
$$
The Killing vector in eq. \eqref{KillingAdS} has zero vorticity $\omega_\mu$ (see definition \eqref{accthvort2}) but non-vanishing acceleration:
\begin{equation}\label{A Ads}
    A_{\mu}=-\frac{r\kappa^2}{1+r^2\kappa^2}(0,1,0,0) \implies A^{2}=-\frac{r^2\kappa^4}{1+r^2\kappa^2}.
\end{equation}
The Riemann tensor is greatly constrained by the maximal symmetry and reads:
\begin{equation}\label{Riemannads}
    R_{\mu\nu\rho\sigma}=-\kappa^2(g_{\mu\rho}g_{\nu\sigma}-g_{\mu\sigma}g_{\nu\rho}) \; 
\end{equation}
so that the Ricci tensor reads:
\be\label{ricads}
R_{\mu\nu}=-3\kappa^2 g_{\mu\nu}
\ee
and the curvature scalar:
\be\label{curvads}
R =-12 \kappa^2 
\ee
%

\subsection{The stress-energy tensor at global equilibrium in AdS}

An exact renormalized value of the stress-energy tensor at finite temperature in AdS, with the Killing vector in \eqref{KillingAdS}
for the massless scalar real, conformally coupled, free field was derived in ref. \cite{Allen:1986ty}. The calculation was performed for 
both Dirichlet and Neumann boundary conditions using point splitting and Hadamard renormalization. The result can be written as:
\begin{equation}\label{AllenTmunu}
    \langle \wT^{\mu\nu}\rangle_{\rm ren}=\frac{\kappa^4}{960\pi^{2}}g^{\mu\nu}+ T^{\mu\nu}_{\text{bulk}} \pm T^{\mu\nu}_\partial
\end{equation}
where:
\begin{equation}\label{bulkset}
    T_{\mu\nu \; \rm bulk} = \frac{1}{\left( 6\pi ^{2}\right) }\left(\frac{\kappa^2}{1+\kappa^2r^{2}}\right)^{2}f_{3}
    \left( \kappa\beta_0\right) (-g_{\mu\nu}+4u_\mu u_\nu)
\end{equation}
and:
\begin{align}\label{boundset}
    T_{\mu\nu \; \partial} =  & \ \frac{\kappa^3}{\left( 8\pi ^{2}r\right)}\Biggl\{ \Biggl[ +\frac{1}{6}\left(\frac{1-\kappa^2r^2}{\kappa^2r^{2}}\right)S_{0}\left( \kappa\beta_0,r\right)-\frac{1}{3\kappa r}C_{1}\left( \kappa\beta_0,r\right)+ \nonumber
    \\
    &-\frac{2}{3}\left(\frac{1}{1+\kappa^2r^{2}}\right) S_{2}\left( \kappa\beta_0,r \right) \Biggl] g_{\mu\nu}
    +\Biggl[\frac{1}{6}\left(3-\frac{1}{\kappa^2r^{2}}\right)S_{0}\left( \kappa\beta_0,r\right)+ \nonumber
    \\
    &+\frac{1}{\kappa r}\left(1-\frac{2}{3}\frac{1}{1+\kappa^2r^{2}}\right)C_{1}\left( \kappa\beta_0,r\right)
    +\frac{2}{1+\kappa^2r^{2}}S_{2}\left( \kappa\beta_0,r\right)\Biggl]u_{\mu}u_{\nu}+ \nonumber
    \\
    &+\Biggl[\frac{1}{6}\left(3\frac{1+\kappa^2r^2}{\kappa^2r^{2}}-4\right)S_{0}\left( \kappa\beta_0,r\right)
    +\frac{1}{\kappa r}\left(\frac{2}{3}\frac{\kappa^2r^{2}}{1+\kappa^2r^{2}}-1\right)C_{1}\left( \kappa\beta_0,r\right)
    + \nonumber \\
    &-\frac{2}{3}\frac{1}{1+\kappa^2r^{2}}S_{2}\left( \kappa\beta_0,r\right)\Biggl]\left(\frac{1+r^2\kappa^2}{r^2\kappa^4}\right)A_{\mu}A_{\nu}\Biggl\}
\end{align}
$u_\mu$ being the the four-velocity $u= \beta/\sqrt{\beta^2}$, and $A_\mu$ the acceleration four-vector
\footnote{In the ref. \cite{Allen:1986ty} the expression \eqref{boundset} contains a small typographical error, notably there is 
a missing $(\kappa r)^{-1}$ factor as pointed out in \cite{Morley:2023exv}.}. The plus sign in eq. \eqref{AllenTmunu} refers to 
Dirichlet boundary conditions, while the minus sign to Neumann boundary conditions, so the term $T^{\mu\nu}_\partial$ entirely 
depends on the boundary conditions. The series in eq. \eqref{bulkset} and \eqref{boundset} read:
\begin{align}\label{series1}
  &  f_{3}\left( \kappa\beta_0\right)=\sum_{n=1}^{\infty}\frac{n^{3}}{\e^{n\kappa\beta_0}-1} \nonumber \\
  &  S_{m}\left( \kappa\beta_0,r\right)=\sum_{n=1}^{\infty}\frac{n^{m}(-1)^{n}}{\e^{n\kappa\beta_0}-1}\sin{(2n\arctan{(\kappa r)})} \\
  &  C_{m}\left( \kappa\beta_0,r\right)=\sum_{n=1}^{\infty}\frac{n^{m}(-1)^{n}}{\e^{n\kappa\beta_0}-1}\cos{(2n\arctan{(\kappa r)})} \nonumber .
\end{align}
where the range of the $\arctan$ function is $[0,\pi/2)$ being the argument positive. If transparent boundary conditions are
chosen, the boundary term \eqref{boundset} vanishes; however this choice is only possible for conformal coupling \cite{Avis:1977yn}.
The three terms in eq. \eqref{AllenTmunu} are independently conserved and, with the exception of the first one, traceless. 

As it is clear from eq. \eqref{AllenTmunu}, the renormalized stress-energy tensor includes a term which is proportional 
to $g_{\mu\nu}$; this is the well-known conformal anomaly and the point-splitting renormalization method is responsible 
for its appearance. The remaining two terms in eq. \eqref{AllenTmunu} vanish in the limit $T_0 \to 0$, so the 
vacuum-subtracted stress-energy tensor reads (see definition \eqref{vacsub}):
\be\label{vsubset}
   \langle \wT^{\mu\nu} \rangle \equiv \Tr(\wrho \, \wT^{\mu\nu} ) - \bra{0} \wT^{\mu\nu} \ket{0}
   = \langle \wT^{\mu\nu} \rangle_{\rm ren} - {\rm Trace\;anomaly} = T^{\mu\nu}_{\text{bulk}} \pm T^{\mu\nu}_\partial
\ee
where $\ket{0}$ is the vacuum associated to the Killing vector in eq. \eqref{KillingAdS}. 

More recently, the exact stress-energy tensor in global AdS was evaluated, both for conformal and minimal coupling, for 
a real free scalar field with Dirichlet boundary conditions \cite{Ambrus:2018olh}. In the quoted expressions, bulk and 
boundary contributions are entangled. The vacuum-subtracted stress-energy tensor is:
\begin{equation}\label{AmbrusTmunu}
    \langle \hat{T}_{\mu\nu} \rangle = \rho \ u_\mu u_\nu -p \ \Delta_{\mu\nu} + \mathcal{A} \ A_\mu A_\nu
\end{equation}
and, for the conformal coupling ($\xi=1/6$) the scalar fields in \eqref{AmbrusTmunu} read:
\begin{equation*}
\begin{split}
&    \rho = \frac{\kappa^4}{192\pi^2(1+r^2\kappa^2)^2}\sum_{n=1}^{\infty}\frac{R(n\kappa\beta_0)}
{(1-r^2\kappa^2+(1+r^2\kappa^2)\cosh{(n\kappa\beta_0)})^3\sinh^4{(n\kappa\beta_0/2)}}  \\
&    p = \frac{\kappa^4}{192\pi^2(1+r^2\kappa^2)^2}\sum_{n=1}^{\infty}\frac{P_{1/6}(n\kappa\beta_0)}
{(1-r^2\kappa^2+(1+r^2\kappa^2)\cosh{(n\kappa\beta_0)})^3\sinh^4{(n\kappa\beta_0/2)}}   \\
& \mathcal{A} = \frac{2\kappa^2(1+r^2\kappa^2)}{3\pi^2}\sum_{n=1}^{\infty}\frac{\sinh^2{(n\kappa\beta_0/2)}}
{(1-r^2\kappa^2+(1+r^2\kappa^2)\cosh{(n\kappa\beta_0)})^3}
\end{split}
\end{equation*}
where:
\begin{align*}
    R(n\kappa\beta_0) =& \left(\kappa ^2 r^2+1\right)^2 \left(\kappa ^2 r^2+21\right) 
    \cosh (3 n \kappa\beta_0 )-6 \left(\kappa ^6 r^6+5 \kappa ^4 r^4-5 \kappa ^2 r^2-9\right) 
    \cosh (2 n \kappa\beta_0 ) + \nonumber \\
    &+3 \left(5 \kappa ^6 r^6-29 \kappa ^4 r^4-25 \kappa ^2 r^2+41\right) \cosh (n \kappa\beta_0 )-10 \kappa ^6 r^6+94 
    \kappa ^4 r^4+2 \kappa ^2 r^2+90
\end{align*}
and 
\begin{align*}
    P_{1/6}(n\kappa\beta_0)=&-\left(\kappa ^2 r^2-7\right) \left(\kappa ^2 r^2+1\right)^2 \cosh (3 n \kappa \beta_0)+6 
    \left(\kappa ^4 r^4+3\right) \left(\kappa ^2 r^2+1\right) \cosh (2 n \kappa\beta_0 )+\nonumber\\
    &-\left(15 \kappa ^6 r^6+69 \kappa ^4 r^4+45 \kappa ^2 r^2-41\right) \cosh (n \kappa \beta_0)+2 
    \left(5 \kappa ^6 r^6+29 \kappa ^4 r^4+7 \kappa ^2 r^2+15\right).
\end{align*}
For the minimal coupling ($\xi=0$) the coefficients are given by:
\begin{equation*}
\begin{split}
&    \rho = \frac{3\kappa^4}{8\pi^2(1+\kappa^2r^2)^2} \sum_{n=1}^\infty 
\frac{1}{(1-r^2\kappa^2+(1+r^2\kappa^2)\cosh{(n\kappa\beta_0)})\sinh^4{(n\kappa\beta_0/2)}}  \\
&    p = \frac{\kappa^4}{16\pi^2(1+\kappa^2r^2)^2}\sum_{n=1}^\infty\frac{P_0(n\kappa\beta_0)}
{(1-r^2\kappa^2+(1+r^2\kappa^2)\cosh{(n\kappa\beta_0)})^3\sinh^4{(n\kappa\beta_0/2)}}  \\
& \mathcal{A} = \frac{\kappa^2}{2\pi^2(1+r^2\kappa^2)} \sum_{n=1}^{\infty} 
 \frac{3 \left(\kappa ^2 r^2+1\right) \cosh (n \kappa\beta_0 )-3 \kappa ^2 r^2-1}{(1-r^2\kappa^2+(1+r^2\kappa^2)\cosh{(n\kappa\beta_0)})^3 \sinh^2{(n\kappa\beta_0/2)}}
\end{split}
\end{equation*}
where:
\begin{align*}
    P_0(n\kappa\beta_0) =&+ \left(3-3 \kappa ^4 r^4\right) \cosh (2 n \kappa\beta_0 )+4 
    \left(3 \kappa ^4 r^4+2 \kappa ^2 r^2+1\right) \cosh (n \kappa\beta_0 )+\nonumber \\
    &-9 \kappa ^4 r^4-8 \kappa ^2 r^2+1 \ ,
\end{align*}
%

\subsection{Analytic distillate of the stress-energy tensor in AdS: conformal coupling}
\label{DistAdsCC}

We are now going to apply the analytic distillation technique to the foregoing expressions of the stress-energy tensor.

Before tackling the AdS case a general discussion on how to perform analytic distillation of a stress-energy tensor is
in order. Let us generically denote by $D \beta$ and $D g$ the covariant derivatives of the four-temperature and of the metric, 
respectively. Since we are interested in an analytic dependence of the stress-energy tensor on $D \beta$ and $Dg$ in zero, in
principle one should first write down the stress-energy tensor $T_{\mu\nu}$ as a function of $D\beta,Dg$. However, this method 
might be inconvenient; the stress-energy tensor is given in terms of parameters, such as $\kappa$ in equations 
\cref{AllenTmunu,bulkset,boundset} and re-expressing in terms of $D\beta,Dg$ might be a complicated task. It is often more 
convenient to identify parameters which analytically depend on $D\beta,Dg$ without explicit coordinate dependence and apply 
distillation directly to them. Since the composition of analytic functions is analytic, the final goal would be achieved. In 
symbols, if $T^{\mu\nu}$ is given as a function of some parameters $\alpha_i$:
$$
  T^{\mu\nu}(\alpha_i)
$$
and if:
$$
  \alpha_i = F_i(D\beta,Dg)  
$$
are analytic functions in $D\beta=0,Dg=0$, then performing an analytic distillation of $T^{\mu\nu}$ with respect to the $\alpha_i$'s 
in the appropriate points would suffice to ensure the resulting stress-energy tensor to be an analytic function of $D\beta,Dg$ in $D\beta,Dg=0$. 

A crucial requirement of distillation is the independence of the coordinate system. The stress-energy tensor is 
given in contravariant or covariant components on a particular coordinate frame and one would like the familiar 
transformation relations to apply to distillates as well, namely:
$$
 \dist T^{\prime\mu\nu} = \frac{\partial x'^\mu}{\partial x^\alpha}\frac{\partial x'^\nu}{\partial x^\beta} \dist T^{\alpha\beta}
 \qquad\qquad
 \dist T_{\mu\nu} = g_{\alpha\mu} g_{\alpha\nu} \dist T^{\alpha\beta}
$$
For this purpose, one has to keep in mind theorem \ref{thm1}; if the maps $x'(x)$ are analytic with respect to the considered
parameters $\alpha_i$ and the metric tensor as well, the above relations are satisfied provided that the stress-energy tensor,
for either its covariant or contravariant components, have no pole for $\alpha_i = F(0,0)$. In the first place, one should 
therefore choose a coordinate frame such that the metric tensor and its inverse are analytic functions of the parameters $\alpha_i$.

In order to simplify the distillation process, it is useful to decompose the stress-energy tensor onto tetrads built with vectors
formed with the tetrad $\{u,A,\omega,l\}$ (see equation \eqref{lvector}), which are analytic combinations of the 
covariant derivative of $\beta$, according to formulae \eqref{accthvort},\eqref{accthvort2}; a decomposition of this sort has been 
presented in \eqref{minkset} and \cref{AllenTmunu,bulkset,boundset}. If the scalar coefficients of the combinations of tetrad 
vectors meet the requirements of the theorem \ref{thm1}, i.e. they do not have poles in the distillation parameters $\alpha_i$ 
in $\alpha_i=F(0,0)$, then it suffices to distillate the scalar functions. 

In the case at hand, i.e. the stress-energy tensor in equations \cref{AllenTmunu,bulkset,boundset} there is only one parameter
$\kappa$, so, according to the foregoing discussion, one should first write the latter as an analytic function of $D\beta,Dg$. 
According to the equations \eqref{curvads}, $\kappa^2$ can be expressed as an analytic function of $R$ \footnote{As has been
mentioned, it is important that parameters such as $\kappa$ can be expressed as a function of the derivatives of $\beta$ and $g$ 
without further coordinate dependence. For instance, $\kappa^2$ cannot be expressed as a function of $\beta^2$ without an explicit 
additional dependence on $r$.}:
$$
  \kappa^2 = - \frac{R}{12}
$$
Thus, $\kappa^2$ can be chosen as the only variable to which distillation can be applied. It should be pointed 
out though, that $\kappa^2$ can be equally written as an analytic function of other variables. For instance, by using equations \eqref{KillingAdS},\eqref{A Ads} and \eqref{ricads}:
$$
\kappa^2 = A^2 \frac{\beta^2}{\beta_0^2 - \beta^2}   \qquad \qquad \kappa^2 = -\frac{1}{3} R_{\mu\nu} u^\mu u^\nu
$$
This ambiguity leads to a problem in singling out the dependence of the stress-energy tensor on the curvature tensors and
the survival of possible non-analytic expressions such as, for instance:
$$
  \sqrt{R (R_{\mu\nu} u^\mu u^\nu)}
$$
after analytic distillation in $\kappa^2$. Such an ambiguity can be fixed only by studying the corresponding problem in a 
space-time where the degeneracy of the above expressions is removed; we will see that this happens in the closed Einstein 
universe in Section \ref{sec:ceu}.

All the scalar functions appearing in the equations \cref{AllenTmunu,bulkset,boundset} also fulfill the requirement of 
having no poles for $\kappa^2=0$; this is not manifest but it can be seen once the asymptotic expansions have been set up. 
Indeed, all the scalar coefficients are given in terms of series in eq. \eqref{series1} with $\kappa$ as the main parameter. 
Since $\kappa = \sqrt{\kappa^2}$ is not an analytic function of $\kappa^2$, there seems to be a difficulty 
in distilling with respect to $\kappa^2$. However, with the definition of distillate \ref{distillate2}, it suffices to 
determine the asymptotic expansion in the variable $\kappa$ by using the theorems in Section \ref{sec:conj}; the replacement
$\kappa \to (\kappa^2)^{1/2}$ will provide a trans-series of the same function in $\kappa^2$, to which analytic distillation
can be readily applied. Furthermore, no terms in the series \eqref{series1} can be expressed as an analytic function of the
second variable to be distilled, namely $A^2$. Hence, we can proceed to the asymptotic expansion in $\kappa$ and subsequent 
distillation of the series in $\kappa^2$ in eq. \eqref{series1} without affecting the second variable $A^2$.

We can first work out the asymptotic expansion of the series $f_3(\kappa\beta_0)$ in the equation \eqref{bulkset}. Since:
\be\label{f3series}
   f_3(\kappa\beta_0) = \frac{1}{\kappa^3} \sum_{n=1}^{\infty} \frac{n^3\kappa^3}{\e^{n\kappa\beta_0}-1}
\ee
the domain of existence of this function is ${\rm Re}\,\kappa > 0$, implying that for $\kappa^2$ the domain
is the full complex plane except the negative real semi-axis. Taking into account that:
$$
   \frac{\kappa^3}{\e^{\kappa\beta_0}-1} = \sum_{k=0}^\infty \frac{B_k}{k!} \beta_0^{k-1} \kappa^{k+2} 
   = \sum_{n=2}^\infty \frac{B_{n-2}}{(n-2)!} \beta_0^{n-3} \kappa^{n} 
$$
we can apply the Corollary \ref{coro1} and obtain the following relation:
\be\label{asyf3}
    \sum_{n=1}^{\infty} \frac{n^3\kappa^3}{\e^{n\kappa\beta_0}-1} \sim \frac{I_{\arg \kappa}}{\kappa} + 
    \zeta(-2)B_0 \frac{\kappa^2}{\beta_0} + \sum_{n=3}^\infty \zeta(-n) \frac{B_{n-2}}{(n-2)!} \beta_0^{n-3} \kappa^{n} = 
    \frac{I_{\arg \kappa}}{\kappa} -\sum_{n=3}^\infty \frac{\zeta(-n)\zeta(-n+3)}{(n-3)!} \beta_0^{n-3}\kappa^{n}
\ee
where:
$$
  I_{\arg \kappa} = \int_\Gamma \di w \; \frac{w^3}{\e^{w\beta_0}-1} 
$$
with $-\pi/2 < \arg \kappa < \pi/2$. 
Now, in the equation \eqref{asyf3} all terms in the rightmost series vanish except $n=3$ because $\zeta(-2l)=0$ for
all positive integers. At the same time, the integral $I_{\arg \kappa}$ can be calculated on the real line because
the integrand function has no poles in the domain $-\pi/2 < \arg w < \pi/2$ and decays rapidly for $w \to \infty$, 
so:
\begin{align}\label{asymf3}
  \sum_{n=1}^{\infty} \frac{n^3\kappa^3}{\e^{n\kappa\beta_0}-1} &\sim \frac{1}{\kappa} 
  \int_0^\infty \di x \; \frac{x^3}{\e^{x\beta_0}-1} -\zeta(-3)\zeta(0) \kappa^3 \nonumber \\
  & = 6 \zeta(4) \frac{1}{\kappa \beta_0^4} - \zeta(-3)\zeta(0) \kappa^3 
  = \frac{\pi^4}{15} \frac{1}{\kappa \beta_0^4} - \frac{1}{240} \kappa^3
\end{align}
We are now in a position to calculate the distillate with respect to $\kappa^2$ in $\kappa^2=0$ of the bulk term of 
the stress-energy tensor in equation \eqref{bulkset}. By using equations \eqref{f3series},\eqref{asymf3} and the 
theorem \ref{thm1} for the distillation of products of functions, we get:
\begin{align}\label{bulkdistillate}
    & \dist_{\kappa^2=0} T_{\mu\nu \; \rm bulk} =  \dist_{\kappa^2=0} \left[ f_3(\kappa\beta_0) 
    \left(\frac{\kappa^2}{1+\kappa^2r^2}\right)^2\left( -g_{\mu\nu}+4u_{\mu}u_{\nu} \right)\right] \nonumber \\
    & = \frac{1}{6\pi^2} \left(\frac{1}{1+\kappa^2r^2}\right)^2 \left[ \dist_{\kappa^2=0} \kappa \sum_{n=1}^{\infty} \frac{n^3\kappa^3}{\e^{n\kappa\beta_0}-1} \right]\left( -g_{\mu\nu}+4u_{\mu}u_{\nu} \right)
 \nonumber \\
 & =  \left(\frac{1}{1+\kappa^2r^2}\right)^2\left( \frac{\pi^2}{90} \frac{1}{\beta_0^4} - \frac{1}{1440 \pi^2} \kappa^4 \right)
 \left( -g_{\mu\nu}+4u_{\mu}u_{\nu} \right)
\end{align}

We now move on to distill the boundary term \eqref{boundset} comprising the series $S_0$, $C_1$ and $S_2$. According
to their expressions \eqref{series1}, they are not harmonic series in general, yet we can use a trick in order to perform 
the asymptotic expansion using the Corollaries \ref{coro1},\ref{coro2}. Define, with $m \geq 0$:
\begin{equation}\label{emfunc}
    E_{m}(\kappa\beta_0,r)=\sum_{n=1}^{\infty}\frac{n^{m}(-1)^{n}}{\e^{n\beta_{0}\kappa}-1}\e^{2 \ii n\arctan(r\kappa)}
\end{equation}
which is such that
$$
    C_{m}\left( \beta_{0}\kappa,r\right)=\text{Re} (E_{m}\left( \beta_{0}\kappa,r\right)), \qquad
    S_{m}\left( \beta_{0}\kappa,r\right)=\text{Im} (E_{m}\left( \beta_{0}\kappa,r\right)).
$$
Again the domain of the functions $E_m(\kappa\beta_0,r)$ is ${\rm Re} \, \kappa >0$ and $\kappa^2 \ne {\rm real\;negative}$.
The trick, analogous to that used in ref. \cite{Becattini:2020qol}, is the introduction of an auxiliary parameter $y$
in the function \eqref{emfunc} such that the extended function $E_m(\kappa\beta_0,r,y)$ is an harmonic series and, 
in the limit $y\to\kappa$, it precisely becomes the right hand side of the equation \eqref{emfunc}. The extended
function reads:
\begin{equation}\label{extended}
    E_{m}\left(\beta_{0}\kappa,r, y \right)\equiv -\left(\frac{1}{\kappa\beta_{0}}\right)^{m}
    \sum_{n=1}^{\infty}(-1)^{n+1}\frac{(n\beta_{0}\kappa)^{m}}{\e^{n\beta_{0}\kappa}-1}\exp\left[i\frac{2n\beta_{0}\kappa}
    {y\beta_{0}}\arctan(ry)\right].
\end{equation}
Since the series \eqref{extended} is uniformly convergent in its domain of existence (${\rm Re} \, \kappa > 0$,
and $y$ such that ${\rm Im} [(\kappa/y) \arctan (ry)] > 0$) we can interchange the limit and the series and
obtain:
$$
  \lim_{y \to \kappa} E_{m}\left(\beta_{0}\kappa,r, y \right) = E_{m}(\kappa\beta_0,r,\kappa)
$$
It is then possible to obtain the asymptotic series of the function in eq. \eqref{emfunc} by taking the limit 
$y \to \kappa$ of the asymptotic power 
series in $\kappa$ of the function \eqref{extended} if the coefficients of the various powers of $\kappa$ are analytic 
functions of $y$ in $y=0$. Indeed, in this case, taking the limit $y \to \kappa$ will not introduce any singularity in 
$\kappa=0$ and the coefficients can then be expanded in a Taylor series about $\kappa =0$ without concern. In symbols, 
if in the asymptotic expansion of the function in \eqref{extended} we have:
$$
  E_{m}\left(\beta_{0}\kappa,r, y \right) \sim \sum_{n} a_n(y,r) \kappa^n + b_n(y,r)[{\rm other \; transmonomials]}
$$
and if the functions $a_n(y,r), b_n(y,r)$ are analytic in $y=0$, then, according to known properties of the
transseries:
$$
   E_{m}\left(\beta_{0}\kappa,r, \kappa \right) \sim \sum_{n} \lim_{y \to \kappa} a_n(y,r) \kappa^n
   + \lim_{y \to \kappa} b_n(y,r)[{\rm other \; transmonomials]}
$$
Thereafter, the functions $a_n(\kappa,r)$ can be expanded in $\kappa$ to obtain the desired final power series.

Let us now see how this trick works in some detail. The equation \eqref{extended} can be recast as:
\begin{equation}\label{recastem}
    E_{m}\left(\beta_{0}\kappa,r,y\right)=-\left(\frac{1}{\kappa \beta_{0}}\right)^{m}
    \sum_{n=1}^{\infty}(-1)^{n+1}F_{m}\left(nx\right),
\end{equation}
where $x=\kappa\beta_0$ and 
$$
    F_{m}(x)=\frac{x^m\e^{i\Phi(y)x}}{\e^x-1}, \qquad \Phi(y)=\frac{2}{y\beta_{0}}\arctan(ry).
$$
The function $F_{m}(x)$ can be written in turn as a power series whose coefficients are linked to Bernoulli 
polynomials:
$$
    F_{m}(x)=\frac{x^m\e^{i\Phi(y)x}}{\e^x-1}
    =\sum_{l=m-1}^{\infty}\frac{B_{l-m+1}\left(i\Phi(y))\right)}{(l-m+1)!}x^{l}.
$$
Using the Corollary \ref{coro2} for the function \eqref{recastem} and the above expression for $F_m(x)$ 
we obtain the following asymptotic expansion:
\begin{equation}\label{recastemasy}
   E_{m}\left(\beta_{0}\kappa,r,y\right) \sim-\left(\frac{1}{\kappa\beta_{0}}\right)^{m}\sum_{l=m-1}^{\infty}
   \frac{B_{l-m+1}\left(i\Phi(y)\right)}{(l-m+1)!}\eta(-l)x^{l}.
\end{equation}
where $\eta$ is the Dirichlet Eta function. Since all the coefficients of the powers of $x = \kappa \beta_0$ are analytic in $y$ 
at $y=0$ (the function $\Phi(y)=(1/y)\arctan(ry)$ is analytic in $y=0$ and $B_n(ic)$ are polynomials), we can safely take the 
limit $y \to \kappa$ term by term to obtain the asymptotic expansion of the original function \eqref{emfunc} for 
$\kappa \to 0$. 

We can now proceed to the calculation of $S_0$. In view of the above discussion, from the equation \eqref{recastemasy} 
and taking the limit within the series:
\begin{equation}\label{S0exp}
    S_{0}\left(\beta_{0}\kappa,r\right)\sim - \text{Im} \left(\sum_{l=-1}^{\infty}
    \lim_{y\rightarrow\kappa} \frac{B_{l+1}\left(i\Phi(y)\right)}{(l+1)!}\eta(-l)(\beta_{0}\kappa)^{l}\right) \; .
\end{equation}
Bernoulli polynomials are given by:
$$
    B_{n}\left(X\right)=\sum_{k=0}^{n}\binom{n}{k}B_{n-k} X^k,
$$
where $B_{n-k}$ are the Bernoulli numbers. Bernoulli numbers with odd index vanish except $B_1$, hence Bernoulli polynomials 
with even index $B_{2p}(X)$ only have one odd term in $X$:
$$
    B_{2p}(X)=-\frac{1}{2}(2p)X^{2p-1}+\sum_{m=0}^{p}A_{m}X^{2m}.
$$
Conversely, Bernoulli polynomials with an odd index $B_{2p+1}(x)$ only have one even term in $x$:
$$
    B_{2p+1}(X)=-\frac{1}{2}(2p+1)X^{2p}+\sum_{m=0}^{p}A_{m+1}X^{2m+1}.
$$
Now, the Dirichlet Eta function vanishes for negative even integers so the only non-vanishing terms in eq. \eqref{S0exp} 
are those with $j=2p+1$, $p$ being an integer. Since:
$$
    \text{Im} B_{2p+2}(i\Phi(y))=(-1)^{p+1}(p+1)\Phi(y)^{2p+1}.
$$
and taking the limit $y \to \kappa$ in the equation \eqref{S0exp}, keeping in mind the definition of $\Phi(y)$, we 
get:
\be\label{S0exp2}
    S_{0}\left(\beta_{0}\kappa,r\right)\sim-\frac{1}{\beta_{0}\kappa}\arctan(r\kappa)+
    \sum_{l=0}^{\infty}(-1)^{l}\frac{(l+1)}{(2l+2)!}\eta(-(2l+1))(2\arctan(r\kappa))^{2l+1}
\ee
Now:
\begin{equation}\label{etaBern}
    \eta(-(2p+1))=\frac{2^{2p+2}-1}{2p+2}B_{2p+2},
\end{equation}
hence for the second term on the right hand side of \eqref{S0exp2} we get:
\begin{align*}
    &\sum_{l=0}^{\infty}(-1)^{l}\frac{1}{2}\frac{2^{2l+2}-1}{(2l+2)!}B_{2l+2}(2\arctan(r\kappa))^{2l+1} \\
    =&\frac{1}{4}\sum_{L=1}^{\infty}(-1)^{L-1}\frac{2^{2L}-1}{(2L)!}B_{2L}2^{2L}(\arctan(r\kappa))^{2L-1}=
    \frac{1}{4}\tan(\arctan(r\kappa)) = \frac{1}{4}\kappa r
\end{align*}
where we have used the known Taylor expansion of the tangent function. The final result for the asymptotic
expansion of the function \eqref{S0exp} is thus:
\begin{equation}\label{S0exp3}
    S_{0}\left(\beta_{0}\kappa,r\right)\sim-\frac{1}{\beta_{0}\kappa}\arctan(r\kappa)+\frac{1}{4}r\kappa.
\end{equation}

With a similar process, reported in Appendix \ref{app:boundaryads}, we obtain the asymptotic expansions for the 
other two series:
\begin{align}\label{C1S2series}
 &C_{1}\left(\beta_{0}\kappa,r\right)\sim-\frac{1}{2\beta_{0}\kappa}+\frac{1}{8}+\frac{1}{8}(r\kappa)^2 
 \nonumber \\
 &S_{2}\left(\beta_{0}\kappa,r\right)\sim-\frac{1}{8}r\kappa-\frac{1}{8}(r\kappa)^3.
\end{align}
The equations \eqref{S0exp3} and \eqref{C1S2series} can be combined to obtain the asymptotic expansion of
the boundary term in powers of $\kappa$ of the stress-energy tensor in the equation \eqref{boundset}:
\begin{align}\label{AdSasymptoticexpansion}
 T_{\mu\nu \, \partial} \sim&\frac{\kappa^2}{48\pi^2r\beta_{0}}\Biggl\{ \Biggl[\frac{1}{r\kappa}+\Biggl(1-\frac{1}{(r\kappa)^2}\Biggl)\arctan(r\kappa)\Biggl]g_{\mu\nu}+ \nonumber
    \\
    &+\Biggl[\Biggl(\frac{1}{(r\kappa)^2}-3\Biggl)\arctan(r\kappa)-\frac{1}{r\kappa}
    \Biggl(3-\frac{2}{(r\kappa)^2+1}\Biggl)\Biggl]u_{\mu}u_{\nu}+ \nonumber
    \\
    &+\Biggl[\Biggl(1-\frac{3}{(r\kappa)^2}\Biggl)\arctan(r\kappa)-\frac{1}{r\kappa}\Biggl(\frac{2(r\kappa)^2}{(r\kappa)^2+1}-3\Biggl)\Biggl]\left(\frac{1+r^2\kappa^2}{r^2\kappa^4}\right)A_{\mu}A_{\nu}\Biggl\}.
\end{align}
The asymptotic power series in $\kappa$ of the \eqref{AdSasymptoticexpansion} can now be easily obtained by 
expanding in a power series all the analytic functions involved. However, all terms of this expansion will have 
{\em odd} exponents in $\kappa$, meaning fractional exponents in $\kappa^2$. In symbols:
$$
 T_{\mu\nu \, \partial} \sim \sum_{n=1}^\infty a_n (\kappa^2)^{(2n+1)/2}
$$
Therefore, according to the definition of the distillate \ref{distillate2}, we achieve a remarkable conclusion:
\be\label{boundvanish}
\dist_{\kappa^2 = 0}  (T_{\mu\nu \, \partial}) = 0
\ee
that is the term of the stress-energy tensor which depends on the boundary conditions in AdS has no analytic part.
This conclusion is the first evidence of the proposed universality conjecture.

Combining the equation \eqref{AllenTmunu} with \eqref{bulkdistillate} and the equation \eqref{boundvanish} 
we finally obtain the analytic distillate of the vacuum-subtracted stress-energy tensor of the conformally coupled free
massless scalar field in AdS:
\be\label{distadsset}
 \dist_{\kappa^2 = 0} \langle \wT_{\mu\nu}\rangle =
 \left( \frac{\pi^2}{90} T_0^4-\frac{\kappa^4}{1440\pi^{2}}\right) \left(\frac{1}{1+\kappa^2r^2}\right)^2
 \left( -g_{\mu\nu}+4u_{\mu}u_{\nu} \right).
\ee
The same result can be obtained from the exact calculation in the form of eq. \eqref{AmbrusTmunu},
where the boundary contribution is entangled with the bulk expression. In this case all the coefficients are already 
in the form of a harmonic series and distillation can be worked out by using Corollary \ref{coro1}. This result 
matches what we would obtain by using the Page approximation for the stress-energy tensor of a conformally coupled scalar 
field \cite{Page:1982fm}.

The distillate \eqref{distadsset} has remarkable features: 
\begin{itemize}
    \item {} it has vanishing trace, just like the original vacuum-subtracted stress-energy tensor in
    eq. \eqref{vsubset};
    \item{} it is covariantly conserved, i.e. $\nabla_\mu \dist_{\kappa^2=0} \langle \wT^{\mu\nu} \rangle=0$;
    \item{} unlike the original vacuum-subtracted stress-energy tensor in \eqref{vsubset} it does not vanish
    at $T_0=0$ but at $T_0 = \kappa/2\pi$
\end{itemize}
The temperature $T_0=\kappa/2\pi$ is linked to the Unruh effect in AdS space-time, as will be discussed in Section
\ref{sec:unruh}. 

It is interesting to note that the boundary term in eq. \eqref{AdSasymptoticexpansion} which has been distilled away
includes terms which could not be neglected in an asymptotic expansion of the full expression of the stress-energy tensor. 
For instance, the pressure term proportional to $g_{\mu\nu}$ in eq. \eqref{AdSasymptoticexpansion} for small $\kappa$ 
can be approximated with the leading term:
$$
  p_\partial \simeq \frac{1}{48 \pi^2 \beta_0} \frac{4}{3} \kappa^3 = \frac{1}{36 \pi^2 \beta_0} \left(-\frac{R}{12}\right)^{3/2} 
$$
where $R$ is the curvature scalar. For small curvature, this correction is more important than the term $\propto \kappa^4$
in eq. \eqref{distadsset}. Yet, this correction is peculiar to AdS; for dS, as we will see, no correction proportional to 
the third power of the square root of curvature is present. In other words, while analytic terms are universal, the non-analytic 
ones are explicitly space-time dependent. Similar terms, characterized by a peculiar scaling given by $\beta_0^{-1}$, have been observed in the presence of a boundary \cite{Kennedy:1979ar}.

\subsection{Analytic distillate of the stress-energy tensor in AdS: minimal coupling}

The asymptotic expansion of the coefficients in \eqref{AmbrusTmunu} for minimal coupling can be obtained by using 
the same methods presented for the conformal coupling.
The proper energy density in eq. \eqref{AmbrusTmunu} can be written as:
\begin{equation*}
    \rho = \frac{3\kappa^4}{8\pi^2(1+\kappa^2r^2)^2}\sum_{n=1}^\infty
    \frac{1}{(1-r^2\kappa^2+(1+r^2\kappa^2)\cosh{(n\kappa\beta_0}))\sinh^4{(n\kappa\beta_0/2)}}
\end{equation*}
which is defined for ${\rm Re} \kappa \ne 0$. Like for the conformal coupling, we should first cast the above series 
in a form suitable for the application of the Corollary \ref{coro1}. We then define:
\begin{equation*}
    f_\rho(\kappa\beta_0,y)=\frac{1}{(1-r^2 y^2+(1+r^2 y^2)\cosh{(\kappa\beta_0}))\sinh^4{(\kappa\beta_0/2)}}
\end{equation*}
and study the series:
\begin{equation*}
    \rho = \frac{3\kappa^4}{8\pi^2(1+\kappa^2r^2)^2} \lim_{y \to \kappa} \sum_{n=1}^\infty f_\rho (n \kappa \beta_0,y)
\end{equation*}
taking advantage of uniform convergence in its domain of existence. The power series of $f_\rho(\kappa\beta_0,y)$ begins 
with a term proportional to $\kappa^{-4}$ and contains only even powers, hence, according to the Corollary \ref{coro1}, 
the asymptotic series of $\rho$ has a finite number of terms. For the proper energy density the following expansion
is eventually obtained:
\begin{equation*}
    \rho \sim \frac{3\kappa^4}{8\pi^2(1+r^2\kappa^2)^2} \left(\frac{8\pi^4}{90\kappa^4\beta_0^4}-
    \frac{\pi^2(5+3r^2\kappa^2)}{9\kappa^2\beta_0^2}+\frac{I_{\arg \kappa}}{\kappa\beta_0}-
    \frac{(71+105r^2\kappa^2+45r^4\kappa^4)}{180}\right)
\end{equation*}
where:
\begin{equation*}
    I_{\arg \kappa} = \lim_{y \to \kappa} \int_\Gamma \di w \; \left(\frac{1}{(1-r^2 y^2+(1+r^2y^2)\cosh{(w\beta_0)})
    \sinh^4{(w \beta_0/2)}}-\frac{8}{w^4\beta_0^4}+\frac{2(5+3r^2y^2)}{3w^2\beta_0^2}\right)
\end{equation*}
and $\Gamma$ is the line with fixed $\arg \kappa$. In principle, the value of the integral depends on the value of $\arg \kappa$
as the integrand function may have poles for given values of $y$. In the limit $y = \kappa$ it can be shown that the integrand 
function has non-trivial poles, so according to the definition of analytic distillation, the term involving $I_{\arg \kappa}$
should be discarded. Besides, whatever the value of the integral, the term $I_{\arg \kappa}/\kappa$ features odd powers of 
$\kappa$, which are removed by analytic distillation in $\kappa^2$ anyway. Altogether we get:
\begin{equation}\label{rhods}
    \dist_{\kappa^2=0} \rho = \frac{\pi^2}{30(1+r^2\kappa^2)^2\beta_0^{4}} - \frac{\kappa^2(5+3r^2\kappa^2)}{24(1+r^2\kappa^2)^2\beta_0^2} -\frac{\kappa^4(71+105r^2\kappa^2+45r^4\kappa^4)}{480\pi^2(1+r^2\kappa^2)^2}  
\end{equation}

The calculation of the other scalar functions proceeds in a similar fashion and their final distillates in $\kappa^2=0$
read:
\begin{align}\label{othersds}
    \dist_{\kappa^2=0} p &= \frac{\pi^2}{90(1+r^2\kappa^2)^2\beta_0^{4}} + \frac{\kappa^2(1-3r^2\kappa^2)}{72(1+r^2\kappa^2)^2\beta_0^2} +\frac{\kappa^4(139+255r^2\kappa^2+135r^4\kappa^4)}{1440\pi^2(1+r^2\kappa^2)^2} 
    \nonumber \\
  \dist_{\kappa^2=0} \mathcal{A} &= \left(\frac{1+r^2\kappa^2}{r^2\kappa^4}\right)\biggl[\frac{r^2\kappa^4}{12(1+r^2\kappa^2)^2\beta_0^2} 
  +\frac{r^2\kappa^6}{48\pi^2(1+r^2\kappa^2)^2}\biggl]    
\end{align}

The final expression of the stress-energy tensor in the minimal coupling thus involves more terms than in the conformal
coupling, being ${\cal A} \ne 0$ and with less simple functions of the radial coordinate. We will elaborate further on this 
expression in Section \ref{sec:uni}.
    
\section{Stress-energy tensor in de Sitter space-time}
\label{sec:ds}

We begin this Section with a brief recapitulation about de Sitter space-time. De Sitter space-time (dS) can be 
identified with the hypersurface:
$$
    X_{0}^{2}-X_{1}^{2}-X_{2}^{2}-X_{3}^{2}-X_{4}^{2}=-a^{2}
$$
embedded in a five dimensional flat space-time with one timelike direction. The embedding coordinates can be
parametrized as:
\begin{align*}
    X_{0}&=a\sinh{\frac{\tau}{a}}, \\
    X_{1}&=a\cosh{\frac{\tau}{a}}\sin{\rho}\cos{\theta}, \\
    X_{2}&=a\cosh{\frac{\tau}{a}}\sin{\rho}\sin{\theta}\sin{\varphi}, \\
    X_{3}&=a\cosh{\frac{\tau}{a}}\sin{\rho}\sin{\theta}\cos{\varphi}, \\
    X_{4}&=a\cosh{\frac{\tau}{a}}\cos{\rho},
\end{align*}
where $-\infty<\tau<\infty$, $0<\rho<\pi$, $0<\theta<\pi$ and $0<\varphi<2\pi$. The metric in these coordinates reads:
\begin{equation}\label{dSmetricGlobal}
    \di s^{2}=\di \tau^{2}-a^2\cosh^2{\frac{\tau}{a}}\left[\di \rho^2+\sin^2{\rho}\left(\di \theta^{2}+
    \sin^{2}{\theta}\, \di \varphi^{2}\right)\right].
\end{equation}
This set of coordinates fully covers dS space-time, thus it is called global. Note that the constant 
time hypersurfaces are spheres and they are Cauchy surfaces for dS, meaning that dS is globally hyperbolic and can
be foliated into Cauchy space-like hypersurfaces.
However this set of coordinates is clearly not static, as there is an explicit dependence on the time coordinate
of the metric tensor. Indeed, dS space-time does not have a globally timelike Killing vector. Nonetheless, it is
possible to use a different parametrization of the embedding coordinates:
\begin{align*}
    X_{0}&=\sqrt{a^2-r^2}\sinh{\frac{t}{a}}, \\  
    X_{1}&=r\cos{\theta}, \\
    X_{2}&=r\sin{\theta}\cos{\varphi}, \\
    X_{3}&=r\sin{\theta}\sin{\varphi}, \\
    X_{4}&=\sqrt{a^2-r^2}\cosh{\frac{t}{a}},
\end{align*}
where $0\leq r\leq a$. In this parametrization, denoting $\kappa = 1/a$ like in AdS case, the metric tensor reads:
\begin{equation}  \label{dSmetric}
    \di s^{2}=(1-r^2\kappa^2)\di t^{2}-(1-r^2\kappa^2)^{-1}\di r^{2}-r^{2}\left(\di \theta^{2}+\sin^{2}{\theta}\, 
    \di\varphi^{2}\right).
\end{equation}
and is manifestly static, i.e. in this region $\partial_t$ is a timelike Killing vector. However, this set of coordinates 
is not global and it only covers a sub-manifold of the de Sitter space-time, the so-called static patch. 
When $r=\kappa^{-1}$ the Killing vector becomes light-like, so a Killing horizon is present beyond which the vector
becomes spacelike. The global thermodynamical equilibrium described by:
\begin{equation}\label{KillingdS}
    \beta^{\mu}=\frac{1}{T_{0}}(1,0,0,0) \implies \beta^{2}=\frac{1}{T_{0}^{2}}(1-r^2\kappa^2)
\end{equation}
has vanishing vorticity $\omega_\mu$ and non-vanishing acceleration:
\begin{equation}
    A_{\mu}=\frac{r\kappa^2}{1-r^2\kappa^2}(0,1,0,0) \implies A^{2}=-\frac{r^2\kappa^4}{1-r^2\kappa^2}.
    \label{AccDS}
\end{equation}
Note that it is possible to recover the AdS global metric \eqref{AdSmetric} from the static patch metric in dS 
through the transformation $\kappa^2\to -\kappa^2$. The same is true for all the relevant scalars, such as $\beta^2$ 
and $A^2$ and curvature invariants.

Field quantization in dS can be carried out in the standard way and the presence of a null horizon does not require 
boundary conditions on the fields. However, regularity requirements limit the nature of quantum states that can be 
considered. For instance it is well known that the presence of a cosmological horizon is strongly linked to the 
emergence of thermodynamical properties, namely the Gibbons-Hawking temperature \cite{Gibbons:1977mu}, which will 
be discussed in a dedicated section.

Finite temperature quantum field theory in the static patch of dS has been considered in \cite{Fursaev:1993hm}. The 
exact stress-energy tensor for a conformally coupled scalar field has been calculated in \cite{Fursaev:1994yj} and
reads:
\begin{equation}\label{dsset}
\langle \wT_{\mu\nu}\rangle_{\rm ren} = \frac{\kappa^4}{960\pi^{2}}g_{\mu\nu}
 + \left( \frac{\pi^2}{90} T_0^4-\frac{\kappa^4}{1440\pi^{2}}\right) \left(\frac{1}{1-\kappa^2r^2}\right)^2
 \left( -g_{\mu\nu}+4u_{\mu}u_{\nu} \right).
\end{equation}
Thus, removing the trace anomaly just like in AdS case:
\begin{equation}\label{dsdistset}
\dist_{\kappa^2=0} \langle \wT_{\mu\nu}\rangle = 
 \left( \frac{\pi^2}{90} T_0^4-\frac{\kappa^4}{1440\pi^{2}}\right) \left(\frac{1}{1-\kappa^2r^2}\right)^2
 \left( -g_{\mu\nu}+4u_{\mu}u_{\nu} \right).
\end{equation}
which corresponds to the analytic distillate \eqref{distadsset} in AdS with the transformation $\kappa^2 \to -\kappa^2$,
thus confirming the universality conjecture. 
Indeed, the equation \eqref{dsdistset} could have been expected based on the equation \eqref{distadsset}. Since the metric
tensor in dS is obtained from that in AdS by changing $\kappa^2 \to -\kappa^2$, and that $\kappa^2$ is proportional
to the curvature tensors, the analytic part of the stress-energy tensor in dS must be obtained by analytically
continuing the analytic part (i.e. the distillate) of the stress-energy tensor in AdS, because the continuation
of the series \eqref{series1} appearing in the exact solution is impossible. The nice feature is that the analytic 
continuation yields not just the distillate in dS, in fact the exact result in dS \eqref{dsset}. In this respect, the 
situation is the same as that described in ref. \cite{Becattini:2020qol} for global equilibrium with acceleration in 
Minkowski space-time; while an analytic continuation from imaginary to real acceleration was impossible for the 
full exact expression, the analytic continuation of the distillate matches the exact result. 

The \eqref{dsdistset} has the same features as the expression of the distilled vacuum-subtracted stress-energy tensor
in AdS \eqref{distadsset}: vanishing trace, covariantly conserved and vanishing for $T_0 = \kappa/2\pi$. We will
discuss more about this in Section \ref{sec:uni}. It also matches the result obtained using Page approximation 
\cite{Page:1982fm}, which is exact in dS.

\section{Stress-energy tensor in the Closed Einstein Universe with vorticity}
\label{sec:ceu}

In the foregoing Sections we focused on global equilibrium states with vorticity-free Killing vector fields. In 
curvilinear coordinates, in which the time coordinate lines are the Killing vector's, this corresponds to considering 
static space-times. Our analysis can be extended to the presence of vorticity, that is to stationary space-times.

In AdS rotating thermal states have been studied for free scalars \cite{Thompson:2025jkn} and fermions \cite{Ambrus:2021eod}. 
A similar calculation has been performed for a conformally coupled scalar field in the Closed Einstein Universe \cite{Panerai:2015xlr}, which is not maximally symmetric.

\subsection{The Closed Einstein Universe}

The Closed Einstein Universe (CEU) is a solution of the Einstein field equation with constant proper energy density,
pressure and a positive cosmological constant. Its metric element is given by:
\begin{equation*}
    \di s^2=\di t^2-a^2 \di \chi^2-a^2\sin^2{\chi}\left(\di \theta^2+\sin^2{\theta}\, \di \varphi^2\right)
\end{equation*}
where $t\in\mathbb{R}$, the two angular variables $\chi,\theta \in[0,\pi]$ and $\phi\in[0,2\pi)$
and the Ricci tensor, in the above coordinates, reads:
\begin{equation}\label{RicciCEU}
    R_{\mu\nu}=2\kappa^2\left(g_{\mu\nu}-\delta^{0}_{\mu}\delta^{0}_{\nu}\right)   \; 
\end{equation}
with $\kappa \equiv 1/a$. The curvature scalar is given by $R=6\kappa^2$.

This space-time is globally hyperbolic and conformally flat (so the Weyl tensor vanishes) and can be identified with the product 
$\mathbb{R}\times S^3$. By means of the transformation $\chi=2\arctan{\left(\kappa r/2\right)}$, the metric can be conveniently
recast as:
\begin{equation*}
    \di s^2=\di t^2-\frac{1}{(1+\kappa^2r^2/4)^2}\di r^2-\frac{r^2}{(1+\kappa^2r^2/4)^2}
    (\di \theta^2+\sin^2{\theta}\di \varphi^2) \; ; 
\end{equation*}
in the limit $\kappa\to0$ one recovers the flat space-time line element.

We are going to focus on a state of global equilibrium with vorticity, described by \cite{Panerai:2015xlr}:
\begin{equation}\label{KillingCEU}
    \beta^\mu=\frac{1}{T_0}\left(1,0,0,\kappa\omega_0\right)
\end{equation}
which is globally timelike if $\omega_0<1$ because:
\be\label{esutemp}
    \beta^2 = g_{\mu\nu} \beta^\mu \beta^\nu = \frac{1}{T_0^2} 
    \left( 1-\frac{\kappa^2 r^2 \sin^2 \theta}{(1+\kappa^2r^2/4)^2} \omega_0^2 \right) = \frac{1}{T_0^2} 
    \left( 1- S \omega_0^2 \right)
\ee
where we have introduced $S$:
\begin{equation*}
    S(r,\theta) \equiv \frac{\kappa^2 r^2}{(1+\kappa^2r^2/4)^2}\sin^2\theta
\end{equation*}
for later convenience. The squared four-temperature in \eqref{esutemp} has a minimum for $\kappa r=2, \sin \theta =1$
and it is thus always positive if $\omega_0<1$. The tetrad associated to the Killing vector field 
$\beta$ is given by:
\begin{align}\label{Killingtetrad}
    u^{\mu} &=\left(1-\frac{\omega_0^2\kappa^2 r^2}{(1+\kappa^2r^2/4)^2}\sin^{2}{\theta}\right)^{-1/2}
    \left(1,0,0,\kappa\omega_0\right) \nonumber \\
    A^{\mu}& =-\kappa^2\omega_0^2\left(1-\frac{\omega_0^2 \kappa^2 r^2}{(1+\kappa^2r^2/4)^2}\sin^{2}{\theta}\right)^{-1}
    \left(0,r\frac{(1-\kappa^2r^2/4)}{(1+\kappa^2r^2/4)}\sin^2{\theta},\sin{\theta}\cos{\theta},0\right) \nonumber \\
    \omega^{\mu}& =\kappa\omega_0\left(1-\frac{\omega_0^2\kappa^2 r^2}{(1+\kappa^2r^2/4)^2}\sin^{2}{\theta}\right)^{-1}\left(0,\left(1+\kappa^2r^2/4\right)\cos{\theta},-\sin{\theta}\frac{\left(1-\kappa^2r^2/4\right)}{r},0\right) \nonumber \\
    l^{\mu}&=\kappa^2\omega_0^2\left(1-\frac{\omega_0^2\kappa^2 r^2}{(1+\kappa^2r^2/4)^2}
    \sin^{2}{\theta}\right)^{-5/2}\sin^{2}{\theta}\left(1-\frac{\kappa^2r^2}{(1+\kappa^2r^2/4)^2}\right)
    \left(\frac{\omega_0^2\kappa^2 r^2}{(1+\kappa^2r^2/4)^2}\sin^2{\theta},0,0,\kappa\omega_0\right)
\end{align}
Note that:
$$
  A \cdot \omega = 0
$$
so that $A$ and $\omega$ are linearly independent and the tetrad is non-degenerate.

It is possible to leverage residual symmetries to write these objects in a more compact way by studying the orbits of the Killing vector \cite{Panerai:2015xlr}:
\begin{equation*}
    v=-\frac{(1+\kappa^2r^2/4)}{\kappa}\cos{\theta}\partial_r + \frac{(1-\kappa^2r^2/4)}{\kappa r}\sin{\theta}\partial_\theta
\end{equation*}
For every non vanishing 2D vector field the orbits are given by the contour lines of a scalar function $f(r,\theta)$. Solving:
\begin{equation*}
    -\frac{(1+\kappa^2r^2/4)}{\kappa}\cos{\theta}\partial_r f + \frac{(1-\kappa^2r^2/4)}{\kappa r}\sin{\theta}\partial_\theta f = 0
\end{equation*}
we obtain
\begin{equation*}
    f=\frac{\kappa r}{(1+\kappa^2r^2/4)}\sin\theta 
\end{equation*}
Therefore, given $(r,\theta)$ we can perform a rotation along the orbits of $v$, identified by $(t,r,\theta,\phi) \to (t,\alpha(r,\theta),\pi/2,\phi)$, where $\alpha(r,\theta)$ is such that
\begin{equation}\label{Scapital}
    \frac{\kappa^2 \alpha^2}{(1+\kappa^2\alpha^2/4)^2}=\frac{\kappa^2 r^2}{(1+\kappa^2r^2/4)^2}\sin^2\theta = S(r,\theta)
\end{equation}
%

\subsection{Stress-energy tensor and distillation}

The CEU is globally hyperbolic and the field quantization mirrors the flat space-time case.
No boundary conditions need to be considered and the vacuum of the Hamiltonian operator:
$$
  \widehat H \equiv \int_\Sigma \di \Sigma_\mu \wT^{\mu\nu} \xi_{0\nu}
$$
with $\xi_0^\nu = (1,0,0,0)$ coincides with the vacuum of the operator
$$
  \int_\Sigma \di \Sigma_\mu \wT^{\mu\nu} \beta_{\nu}
$$
with $\beta$ given by the equation \eqref{KillingCEU}. The vacuum-subtracted stress-energy tensor for the state of global 
thermodynamical equilibrium described by \eqref{KillingCEU} can be decomposed onto the tetrad $\{u,A,\omega,l\}$ (see discussion 
in subsection \ref{DistAdsCC}) \cite{Panerai:2015xlr}:
\begin{equation}\label{setesu}
    \langle\hat{T}^{\mu\nu}\rangle=\rho u^{\mu}u^{\nu}-p\Delta^{\mu\nu}+W\omega^{\mu}\omega^{\nu}+\mathcal{A}A^{\mu}A^{\nu}
    +G (u^{\mu}l^{\nu}+u^{\nu}l^{\mu})
\end{equation}
For a real conformally coupled scalar field the scalar thermodynamic functions are given by (we introduce the notation 
$\bar\kappa=\kappa\beta_0$):
\begin{align}\label{esuseries}
    \rho & = \frac{\kappa^4}{12\pi^2(1-S\omega_0^2)}\sum_{n=1}^{\infty}
    \frac{R(n\bar\kappa)}{[1-S-\cosh{(n\bar\kappa)}+
    S\cosh{(n\bar\kappa\omega_0)}]^3}, \nonumber \\
    p &= \frac{\kappa^4}{24\pi^2(1-S\omega_0^2)}
    \sum_{n=1}^{\infty}\frac{P(n\bar\kappa)}{[1-S-\cosh{(n\bar\kappa)}+S\cosh{(n\bar\kappa\omega_0)}]^3}  \nonumber \\
     \mathcal{A} &= \frac{\kappa^2(1-S\omega_0^2)}{6\pi^2(1-S)\omega_0^4} \sum_{n=1}^{\infty}
     \frac{F_A(n\bar\kappa)}{[1-S-\cosh{(n\bar\kappa)}+S\cosh{(n\bar\kappa\omega_0)}]^3} \nonumber \\
     W &= \frac{\kappa^2(1-S\omega_0^2)}{6\pi^2(1-S)\omega_0^2}\sum_{n=1}^{\infty}
     \frac{F_W(n\bar\kappa)}{[1-S-\cosh{(n\bar\kappa)}+S\cosh{(n\bar\kappa\omega_0)}]^3}
     \nonumber \\
     G &=\frac{\kappa^2(1-S\omega_0^2)}{6\pi^2(1-S)\omega_0^3}\sum_{n=1}^{\infty}
     \frac{F_q(n\bar\kappa)}{[1-S-\cosh{(n\bar\kappa)}+S\cosh{(n\bar\kappa\omega_0)}]^3}
\end{align}
where{\footnote{We point out that in the original paper ref.\cite{Panerai:2015xlr} the transverse coefficient $G$ is 
not reported correctly due to a typographical error.}}
\begin{align*}
    R(n\bar\kappa) &= +2(1+S^2\omega_0^2)+(-2+S+S^2\omega_0^2)\cosh{(n\bar\kappa)}+ \nonumber \\
    &+(1+S\omega_0^2-2S^2\omega_0^2)\cosh{(n\bar\kappa\omega_0)}-(1+S+S\omega_0^2+S^2\omega_0^2)
    \cosh{(n\bar\kappa)}\cosh{(n\bar\kappa\omega_0)}+ \nonumber\\
    & -(3+S\omega_0^2)\sinh^2{(n\bar\kappa)}-
    S(1+3S\omega_0^2)\sinh^2{(n\bar\kappa\omega_0)}+12S\omega_0\sinh{(n\bar\kappa)}\sinh{(n\bar\kappa\omega_0)} 
    \nonumber \\ \nonumber \\
   P(n\bar\kappa) &= + 1 + 7 S + S (7 + S) \omega_0^2 - (1 + 3 S \omega_0^2) \cosh\big(2 n \bar\kappa\big) + 
   2 \big(1 + S (-2 + \omega_0^2)\big) \cosh\big(n \, \omega_0 \, \bar\kappa\big) +\nonumber \\
  & - 2 \cosh\big(n \bar\kappa\big) \Big[ S \big(-1 - (-2 + S) \omega_0^2\big) + (1 + S) (1 + S \, \omega_0^2) 
\cosh\big(n \, \omega_0 \, \bar\kappa\big) \Big] +\nonumber \\
  & - S (3 + S \, \omega_0^2) \cosh\big(2 n \, \omega_0 \, \bar\kappa\big) + 24 S \, \omega_0 \, 
  \sinh\big(n \bar\kappa\big) \sinh\big(n \, \omega_0 \, \bar\kappa\big) \nonumber \\ \nonumber \\
  F_A(n\bar\kappa) & = -2(1+\omega_0^2) + \omega_0^2 \cosh\!\left(2 n \bar\kappa\right) + \cosh\!\left(n \bar\kappa\right) 
\left(-1+\omega_0^2 + (1+\omega_0^2)\cosh\!\left(n \omega_0 \bar\kappa\right)\right) + \nonumber \\ 
& - (-1+\omega_0^2)\cosh\!\left(n \omega_0 \bar\kappa\right) + \cosh\!\left(2 n \omega_0 \bar\kappa\right)
- 6 \omega_0 \sinh\!\left(n \bar\kappa\right)\sinh\!\left(n \omega_0 \bar\kappa\right) \nonumber \\ \nonumber \\
 F_W(n\bar\kappa) &= 1 - 3S + (-3 + S)S \omega_0^2 + S \omega_0^2 \cosh\!\left(2 n \bar\kappa\right)
- \left(1 + S(-2 + S \omega_0^2)\right)\cosh\!\left(n \omega_0 \bar\kappa\right) + 
  S \cosh\!\left(2 n \omega_0 \bar\kappa\right) +\nonumber \\
 &+ \cosh\!\left(n \bar\kappa\right) 
 \left(-1 - (-2 + S)S \omega_0^2 + (1 + S^2 \omega_0^2)\cosh\!\left(n \omega_0 \bar\kappa\right)\right)
- 6 S \omega_0 \sinh\!\left(n \bar\kappa\right)\sinh\!\left(n \omega_0 \bar\kappa\right) 
\end{align*}
\begin{align*}
    F_q(n\bar\kappa)& = (1+S)\omega_0-(1-S)\omega_0\cosh{(n\bar\kappa)}+(1-S)\omega_0\cosh{(n\bar\kappa\omega_0)}-(1+S)\omega_0\cosh{(n\bar\kappa)}\cosh{(n\bar\kappa\omega_0)}+
    \nonumber   \\
    &-2\omega_0\sinh^2{(n\bar\kappa)}-2S\omega_0\sinh^2{(n\bar\kappa\omega_0)}+3(1+S\omega_0^2)\sinh{(n\bar\kappa)}
  \sinh{(n\bar\kappa\omega_0)} 
\end{align*}
and $S$ is given by the equation \eqref{Scapital}. Note that the stress-energy tensor \eqref{setesu} is trace-free because 
the conformal anomaly vanishes identically in the CEU \cite{Brown:1977sj}. 

According to the general discussion in subsection \ref{DistAdsCC}, we should distill the stress-energy tensor 
with respect to the derivatives of the four-temperature and the curvature tensors $D\beta$ and $Dg$ but it is more 
convenient to find analytic relations between the parameters appearing in those functions, i.e. $\kappa$ and $\omega_0$ 
and $D\beta,Dg$ without further coordinate dependence. Now,  according to \eqref{RicciCEU}, $\kappa^2$ can be written as 
proportional to the curvature scalar:
$$
    \kappa^2 = \frac{R}{6}
$$
while $\kappa \omega_0$ can be simply written as $T_0 \beta^3$ according to the equation \eqref{KillingCEU};
since we do not require analytic dependence on $\beta$, the variable $\kappa \omega_0$ will not be involved in the distillation 
process. If the scalar functions in \eqref{esuseries} do not have poles for $\kappa^2=0$, we can take advantage of the theorem 
\ref{thm1} and apply distillation only to them because the tensor combinations in \eqref{setesu} are analytic in $\kappa^2=0$.

Before showing the results of the analytic distillation in the general case with $\omega_0 \ne 0$, we first address the 
case with $\omega_0 = 0$. In this limit, the global equilibrium stress-energy tensor with the Killing vector 
$\beta=\beta_0\partial_t$ is recovered:
\begin{equation*}
    \langle\hat{T}_{\mu\nu}\rangle = \rho u_\mu u_\nu - \frac{\rho}{3}\Delta_{\mu\nu}
\end{equation*}
where:
\begin{equation*}
    \rho = \frac{\kappa^4}{16\pi^2}\sum_{n=1}^{\infty}\frac{2+\cosh{(n\kappa\beta_0})}{\sinh^4{(n\kappa\beta_0/2)}}.
\end{equation*}
is independent of space-time coordinates. The above function does not have a pole in $\kappa^2=0$ and its distillation can be 
worked out much the same way as for the AdS case, by using Corollary \ref{coro1}. The results is:
\begin{equation*}
   \dist_{\kappa^2=0} \rho = \frac{\kappa^4}{16\pi^2}\left(\frac{8\pi^2}{15\kappa^4\beta_0^4}-\frac{1}{30}\right)
\end{equation*}
so that:
\begin{equation}\label{distHomogeneousCEU}
    \dist_{\kappa^2=0}  \langle\hat{T}_{\mu\nu}\rangle = \left[\frac{\pi^2}{90\beta_0^4}-\frac{\kappa^4}{1440\pi^2}\right]
    (-g_{\mu\nu}+4u_{\mu}u_{\nu}).
\end{equation}
which agrees with the known high temperature limiting expression \cite{Altaie:1978dx}. Note that, once again, after
distillation the stress-energy tensor no longer vanishes in the limit $T_0 \to 0$ but for the finite value $T_0 = \kappa/2\pi$.

If $\omega_0 \ne 0$, again all the scalar functions in \eqref{series1} do not have poles in $\kappa^2=0$ (this can be seen
during the asymptotic expansion procedure) and we can distill them by using Corollary \ref{coro1}, introducing auxiliary 
variables like in the distillation of the series $E_m$ for the AdS case, see subsection \ref{DistAdsCC}. The expansion has been performed using Mathematica and the final results read:
\begin{align}\label{distCEUcoeff}
   \dist_{\kappa^2=0} \rho = & \frac{\pi^2}{30(1-S\omega_0^2)^2\beta_0^4}+
   \frac{(1-S)\omega_0^2\kappa^2}{36(1-S\omega_0^2)^3\beta_0^2} \nonumber \\
    &-\frac{3S^2\omega_0^8-4S(5S-8)\omega_0^6+10(S^2-5S+1)\omega_0^4+4(8S-5)\omega_0^2+3}{1440\pi^2(1-S\omega_0^2)^4}\kappa^4
 \nonumber \\
   \dist_{\kappa^2=0} p = &\frac{\pi^2}{90(1-S\omega_0^2)^2\beta_0^4} + 
   \frac{(1-S)\omega_0^2\kappa^2}{36(1-S\omega_0^2)^3\beta_0^2} \nonumber \\
    &-\frac{S^2\,\omega_0^8 + 4(16 - 5S)S\,\omega_0^6 + 10\big(1 -11S + S^2\big)\,\omega_0^4 + 4(-5 + 16S)\,\omega_0^2 + 1}
    {1440\pi^2(1-S\omega_0^2)^4}\kappa^4
  \nonumber \\
    \dist_{\kappa^2=0} \mathcal{A} & = \frac{\kappa^2(1-S\omega_0^2)}{6\pi^2(1-S)\omega_0^4}
    \left[\frac{\omega_0^2(1-\omega_0^2)^2}{4(1-S\omega_0^2)^4}\right]
   \nonumber \\
    \dist_{\kappa^2=0} W &= \frac{\kappa^2(1-S\omega_0^2)}{6\pi^2(1-S)\omega_0^2}\left[-\frac{\pi^2(1-S)\omega_0^2}
    {3(1-S\omega_0^2)^2\kappa^2\beta_0^2}-\frac{\omega_0^2(2S^2\omega_0^4-5S\omega_0^4-S^2\omega_0^2+8S\omega_0^2-\omega_0^2-5S-2)}    
    {12(1-S\omega_0^2)^3}\right]
  \nonumber \\
   \dist_{\kappa^2=0} G &= \frac{\kappa^2(1-S\omega_0^2)}{6\pi^2(1-S)\omega_0^3}\left[\frac{\pi^2\omega_0(1-\omega_0^2)}
   {3(1-S\omega_0^2)^2\kappa^2\beta_0^2}+\frac{\omega_0(\omega_0^4-1)}{60(1-S\omega_0^2)^2}\right]
\end{align}
Just like in the AdS case, in the asymptotic expansion of the stress-energy tensor there are contributions from the 
integral term in Corollary \ref{coro1}. In case of $\omega_0=0$ such terms vanish, while in the $\omega_0 \ne 0$ case 
they do not and are discarded by analytic distillation because they are proportional to an odd power of $\kappa$, hence 
non-integer power of $\kappa^2$; they are listed in Appendix \ref{app:covform}. As has been mentioned, there is no need of a 
further distillation in $\kappa \omega_0$; indeed, the poles in $\omega_0$ are canceled by the corresponding powers in the tensor 
combinations of the tetrad $\{u,A,\omega,l\}$ and the stress-energy tensor remains finite for $\omega_0 = 0.$

Like for the AdS and dS case, it can be checked that:
$$
   \dist_{\kappa^2=0} \langle \wT^\mu_\mu \rangle = 0
$$
and
$$
   \nabla_\mu \dist_{\kappa^2=0} \langle \wT^{\mu\nu} \rangle = 0
$$
just like for the original stress-energy tensor in eq. \eqref{setesu} with the scalar functions in \eqref{esuseries}.

\section{Universal form of the distillate of the stress-energy tensor}
\label{sec:uni}

So far, we have obtained the expressions of the analytic distillates of the stress-energy tensor of the conformally and minimally
coupled scalar free field in several space-times. Now, we would like to compare them with each other by re-expressing them as a 
function of tensor, vector and scalar fields obtained from the derivatives of the Killing vector field as well as from the 
curvature tensors in a fully covariant fashion. The goal is to verify the conjectured universality, meaning that the resulting 
expression is the same independently of the space-time. 

Indeed, once the expression of the stress-energy tensor in a specific space-time is found, the corresponding covariant expression 
is not unique. This especially applies to highly symmetric space-times. For instance, as has been mentioned in Section \ref{sec:ads}, 
for the four-temperature \eqref{KillingAdS} in AdS the following degeneracy appears:
\begin{equation}\label{deg}
    R_{\mu\nu} = \frac{1}{4}Rg_{\mu\nu} = \left(R_{\rho\sigma}u^\rho u^\sigma\right) g_{\mu\nu}
\end{equation}
meaning that $\kappa^2$ can be replaced with either $-R/12$ or $-(1/3) R_{\mu\nu}u^\mu u^\nu$ as the argument of all scalar functions.

In order to check the conjectured existence of a universal analytic expression of the stress-energy tensor we then employ the 
following strategy: at each order in the derivatives (of the four-temperature and the metric), a general expression of the 
stress-energy tensor is written down in terms of a linear combination of all the possible tensors and scalar functions with unknown 
coefficients. The coefficients are pinned down by comparing the general expression with our distilled results. Through this method, 
at each order, a system of equations for the unknown coefficients is obtained and their universality can be verified. Note that not 
all coefficients are actually independent, for the stress-energy tensor is required to satisfy constraints such as the conservation 
equation and is invariant under some symmetry transformations.

\subsection{Conformally coupled field}

At the zeroth order in derivatives the most general form of the stress-energy tensor reads:
\begin{equation}\label{zerorder}
    \langle\hat{T}_{\mu\nu}\rangle^{(0)} = \rho^{(0)} u_\mu u_\nu - p^{(0)}\Delta_{\mu\nu}
\end{equation}
where $\rho^{(0)}=a_1T^4$ and $p^{(0)}=a_2T^4$. It is straightforward to check that in all the analytic distillates found
in the examined space-times we have:
\begin{equation*}
 a_1=\frac{a_2}{3}=\frac{\pi^2}{30}
\end{equation*}

No first order term appears in the expressions of the stress-energy tensor. At the second order in derivatives one can write the 
following general form, which is consistent with transformations under parity:
\begin{align}\label{generalSecondOrder}
    \langle\hat{T}_{\mu\nu}\rangle^{(2)} =&\rho^{(2)} u_{\mu}u_{\nu}-p^{(2)}\Delta_{\mu\nu}+\mathcal{A}^{(0)}A_{\mu}A_{\nu}+W^{(0)}\omega_{\mu}\omega_{\nu}+p_R^{(0)}R_{\mu\nu} +q^{(0)}\left(u_{\mu}l_{\nu}+u_{\nu}l_{\mu}\right)\nonumber \\
    &+p_{Ru}^{(0)}\left(u_{\mu} R_{\nu}^{\ \ \alpha} u_\alpha + u_\nu R_{\mu}^{\ \ \alpha}u_\alpha\right) + p_{C}^{(0)}C_{\mu\rho\nu\sigma}u^{\rho}u^{\sigma},
\end{align}
where:
\begin{align*}
    \rho^{(2)}&=b_1T^2A^2+b_2T^2\omega^2+b_3T^2R+b_4T^2R^{\mu\nu}u_\mu u_\nu   \\
    p^{(2)}&=c_1T^2A^2+c_2T^2\omega^2+c_3T^2R+c_4T^2R^{\mu\nu}u_\mu u_\nu
\end{align*}
$C_{\mu\rho\nu\sigma}$ being the Weyl tensor, and:
\begin{equation*}
    \mathcal{A}^{(0)}=d_1T^2, \ \ W^{(0)}=d_2T^2, \ \ q^{(0)}=d_3T^2, \ \ p_R^{(0)}=d_4T^2, \ \ p_{Ru}^{(0)}=d_5T^2, \ \ p_{C}^{(0)}=d_6T^2
\end{equation*}
All of our results have been obtained in conformally flat space-times, so $C_{\mu\rho\nu\sigma}=0$ and the coefficient $d_6$
cannot be determined. 

The distillate of the stress-energy tensor in AdS at global equilibrium \eqref{distadsset} with the Killing vector \eqref{KillingAdS} vanishes at the second order in derivatives. Thus, the \eqref{generalSecondOrder} implies:
\begin{align*}
    \beta^2\langle\hat{T}_{\mu\nu}\rangle^{(2)} =& \left(b_1A^2+b_3R+b_4R^{\rho\sigma}u_\rho u_\sigma\right)u_{\mu}u_{\nu} - \left(c_1A^2+c_3R+c_4R^{\rho\sigma}u_\rho u_\sigma\right)\Delta_{\mu\nu} +\\
    &+ d_1A_\mu A_\nu + d_4R_{\mu\nu} + d_5\left(u_{\mu} R_{\nu}^{\ \ \alpha} u_\alpha + u_\nu R_{\mu}^{\ \ \alpha}u_\alpha\right) = \\ 
    =&-\kappa^2\left(\frac{r^2\kappa^2}{1+r^2\kappa^2}(b_1+c_1)+3(4b_3+4c_3+b_4+c_4+2d_5)\right)u_\mu u_\nu + \\
    &-\kappa^2\left(\frac{r^2\kappa^2}{1+r^2\kappa^2}c_1+3(4c_3+c_4+d_4)\right)g_{\mu\nu}+d_1A_{\mu}A_{\nu} = 0
\end{align*}
whence:
\begin{equation*}
b_1=c_1=d_1=0, \ \ \ 
4b_3+b_4+2d_5 = 0, \ \ \
4c_3+c_4+d_4 = 0. 
\end{equation*}
The very same solution is obtained from the distillate in dS space-time \eqref{dsset}. Following the same strategy (see
Appendix \ref{app:covform} for details) for the distillate in the CEU (see eqs. \eqref{setesu} and \eqref{distCEUcoeff}) 
for both the $\omega_0=0$ and $\omega_0 \ne 0$ case, one obtains an over-determined system of equations which,
nevertheless, admits a solution. The final, universal, expression at the second order reads:
\begin{align}\label{2ndorder}
    \langle\widehat{T}_{\mu\nu}\rangle^{(2)} =&\left(-\frac{1}{36}T^2\omega^2+\frac{1}{18}T^2R^{\rho\sigma}u_\rho 
    u_\sigma\right)u_{\mu}u_{\nu}+\frac{1}{36}T^2\omega^2\Delta_{\mu\nu}-\frac{1}{18}T^2\omega_{\mu}\omega_{\nu}+ \nonumber \\
    &+\frac{1}{18}T^2(u_{\mu}l_{\nu}+u_{\nu}l_{\mu})-\frac{1}{36}T^2\left(u_{\mu} R_{\nu}^{\ \ \alpha} u_\alpha+u_{\nu} 
    R_{\mu}^{\ \ \alpha} u_\alpha\right)
\end{align}
Remarkably, this expression exactly reproduces the flat space-time one \eqref{minkset} with the scalars \eqref{minkfunc}
by setting the curvature tensors equal to zero, in full agreement with the universality conjecture. Note that the equation 
\eqref{2ndorder} matches the high temperature limit obtained in \cite{Fursaev:2001yu} and a perturbative result obtained 
through Kubo formulae in \cite{Kovtun:2018dvd}. Thanks to the calculation in CEU, we can then conclude that the second order 
term vanishes in both dS and AdS with vanishing vorticity $\omega_\mu=0$, because the Ricci tensor is proportional to the 
metric tensor, leading to the cancellation of the terms:
$$
 \frac{1}{18}T^2R^{\rho\sigma}u_\rho u_\sigma u_{\mu}u_{\nu}-\frac{1}{36}T^2\left(u_{\mu} 
 R_{\nu}^{\ \ \alpha} u_\alpha+u_{\nu} R_{\mu}^{\ \ \alpha} u_\alpha\right) = 0
$$

The next, and highest, non-vanishing order in powers of derivatives is the fourth. In principle, the same method as for
the second order can be used, but the number of possible terms is significantly larger. In practice, the expressions
found in the examined space-times are not sufficient to pin down all the coefficients at this order. Notwithstanding, it 
is possible to determine the coefficients surviving in the flat space-time case at the fourth order by using dS, AdS and CEU 
results. The final result reads:
\begin{align*}
    \langle\widehat{T}_{\mu\nu}\rangle^{(4)} =& -\frac{1}{1440\pi^2}\left(3A^4+10\omega^4+32A^2\omega^2\right)u_{\mu}u_{\nu} +\frac{1}{1440\pi^2} \left(A^4+10\omega^4+A^2\omega^2\right)\Delta_{\mu\nu} + \\ \nonumber
    &-\frac{1}{288\pi^2}\left(7A^2+4\omega^2\right)\omega_\mu\omega_\nu +\frac{1}{480\pi^2} A^2A_{\mu}A_{\nu} -\frac{1}{360\pi^2}\left(A^2+\omega^2\right)(u_\mu l_\nu+ u_\nu l_\mu) + \\ \nonumber
    &+\text{curvature corrections}
\end{align*}
which is in full agreement with the equation \eqref{minkset} and \eqref{minkfunc} in the Minkowski space-time, again
bearing out the universality conjecture.

\subsection{Minimally coupled field}

For the minimally coupled scalar field the lack of a  distillate in the presence of rotation prevents the derivation of a universal expression for the stress-energy tensor up to second order in derivatives, as we have done for conformal coupling. However, we can 
still use the distillate in AdS to constrain the coefficients of the general expression following the same procedure described in 
the previous subsection and prove the universality of all the corrections depending on $A^2$ and $A^4$.

At zero order in derivatives, the minimally coupled scalar field mirrors the behavior of the conformal coupling case, as given by 
\eqref{zerorder}. At second order in derivatives the most general form of the stress-energy tensor without vorticity is given by \eqref{generalSecondOrder} with $\omega_\mu=0$, that is:
\begin{equation}\label{2AdSMinimal}
    \langle\hat{T}_{\mu\nu}\rangle^{(2)} =\rho^{(2)} u_{\mu}u_{\nu}-p^{(2)}\Delta_{\mu\nu}+\mathcal{A}^{(0)}A_{\mu}A_{\nu}+p_R^{(0)}R_{\mu\nu} +p_{Ru}^{(0)}\left(u_{\mu} R_{\nu}^{\ \ \alpha} u_\alpha + u_\nu R_{\mu}^{\ \ \alpha}u_\alpha\right),
\end{equation}
with:
\begin{align*}
    \rho^{(2)}&=b_1T^2A^2+b_3T^2R+b_4T^2R^{\mu\nu}u_\mu u_\nu   \\
    p^{(2)}&=c_1T^2A^2+c_3T^2R+c_4T^2R^{\mu\nu}u_\mu u_\nu
\end{align*}
\begin{equation*}
    \mathcal{A}^{(0)}=d_1T^2, \ \  p_R^{(0)}=d_4T^2, \ \ p_{Ru}^{(0)}=d_5T^2.
\end{equation*}
We can use the previously described matching procedure to constrain these coefficients. The same thing can be done at 
order four in derivatives, more details can be found in Appendix \ref{app:covform}. Eventually, all coefficients surviving 
in the flat space-time limit can be identified and some constraints can be obtained on the others; they again agree with the
equation \eqref{minkfunc2}. 

Among the several covariant forms of the stress-energy tensor in AdS, we pick the one in which the Ricci tensor does 
not appear, which is given by:
\begin{equation}\label{adsminimset}
    \langle\widehat{T}_{\mu\nu}\rangle = \rho u_{\mu}u_{\nu} - p\Delta_{\mu\nu} + \mathcal{A}A_{\mu}A_{\nu}
\end{equation}
with:
\begin{align}\label{adsminimscalar}
    \rho &= \frac{\pi^2}{30\beta^4}+\frac{1}{12\beta^2}\left(\frac{5}{2}\left(\frac{R}{12}\right)-A^2\right)
    +\frac{1}{480\pi^2}\left(-71\left(\frac{R}{12}\right)^2+37\left(\frac{R}{12}\right)A^2-11A^4\right)
   \nonumber \\
    p &= \frac{\pi^2}{90\beta^4} + \frac{1}{18\beta^2}\left(A^2-\frac{1}{4}\left(\frac{R}{12}\right)\right) 
    + \frac{1}{1440\pi^2}\left(139\left(\frac{R}{12}\right)^2-23\left(\frac{R}{12}\right)A^2+19A^4\right)
   \nonumber \\
    \mathcal{A} &=\frac{1}{12\beta^2} - \frac{1}{48\pi^2}\left(\left(\frac{R}{12}\right)-A^2\right)
\end{align}
This form can be further converted to highlight the role of the dS Unruh temperature, see Section \ref{sec:unruh}.

\section{Discussion and the Unruh effect}
\label{sec:unruh}

One of the most remarkable features of the universal, exact solution of the distillate of $\langle \wT^{\mu\nu} \rangle$ is that it 
is a truncated expansion in the curvature tensors and the derivatives of the Killing vector, like in the flat space-time case. 
The expansion in $A^\mu,\omega^\mu$ and curvature tensors actually stops at the fourth order, as assumed in ref. 
\cite{Khakimov:2023emy}. Terms which, in principle, are analytic in the derivatives and give rise to an infinite series in 
the curvature such as, for instance:
$$
   \frac{T^4A^2}{aT^2+bR}
$$
with $a$ and $b$ arbitrary real constants, do not appear in the general expression. 

As has been stressed, the coefficients of all the terms appearing in the thermodynamic scalar functions in curved space-time 
are the same as in Minkowski space-time in all examined cases, bearing out the universality conjecture. In fact, a difference 
between the stress-energy tensor in Minkowski space-time in eq. \eqref{minkset} and those in AdS, dS, CEU should be spelled out: 
in Minkowski, the scalars \eqref{minkfunc}, \eqref{minkfunc2} have been obtained by subtracting the expectation value in the 
so-called Minkowski vacuum $\ket{0_M} $(see normal ordering in \eqref{minkset}) pertaining to constant Killing field $\beta^\mu=b^\mu$
and not to the Killing vector with acceleration or vorticity $\beta_\mu = b_\mu + \varpi_{\mu\nu} x^\nu$. 
Conversely, in curved space-time, the stress-energy tensor has been calculated 
by subtracting the expectation value in the {\em same} vacuum pertaining to the Killing vector under consideration, possibly endowed
with an acceleration. The full correspondence between flat and curved cases likely resides in the nature of the examined vacua: 
in AdS and CEU case, the vacua correspond to Killing vectors \eqref{KillingAdS}, \eqref{KillingCEU} which are globally time-like 
and maximally invariant under the symmetry transformation of the space-time \cite{Avis:1977yn}, just like the Minkowski vacuum. In dS, the 
vacuum associated to \eqref{KillingdS} is not maximally symmetric, but it corresponds to the maximal region (the so-called static 
patch) covered by a time-like Killing vector. Because of this similarity, one should then expect similar terms in
the expansion.

Another remarkable feature is that the distilled $\langle \wT^{\mu\nu} \rangle$ does not vanish for $T_0=0$, and consequently
for $T=0$. For instance, in the Minkowski space-time, the functions \eqref{minkfunc} and \eqref{minkfunc2} do not vanish at 
$T=0$; the same happens for the AdS and dS expressions in \eqref{distadsset} and \eqref{dsdistset} as well as in the CEU 
space-time \eqref{distCEUcoeff}. In fact, for $\omega^\mu=0$, corresponding to $\omega_0=0$ in the CEU case, all the distilled 
expressions of the stress-energy tensor in the conformal coupling case in eq. \eqref{minkset},\eqref{distadsset},\eqref{dsdistset} 
and \eqref{distHomogeneousCEU} vanish at some finite value of $T_0$ or $T$. In Minkowski space-time, according to \eqref{minkset}, 
this happens at the comoving Unruh \cite{Unruh:1976db} temperature, that is for:
\be\label{unruh}
  T_U = \frac{1}{2\pi} \sqrt{-A^2}
\ee
as observed in ref. \cite{Becattini:2017ljh}. The same property has been observed also in the massive case \cite{Prokhorov:2019yft}.
We point out that in the massless case the distilled $\langle\widehat{T}_{\mu\nu}\rangle$ \eqref{minkset} matches the exact result 
in Rindler space-time, while in the massive case this property holds up to the fourth order in derivatives.

The existence of a well-defined Unruh temperature has been rigorously established for observers in AdS and dS space-times 
\cite{Deser:1997ri,Narnhofer:1996zk,Bros:2001tk}:
\begin{equation*}
    T_U^2=\frac{1}{4\pi^2}\left(\frac{R}{12}-A^2\right)
\end{equation*}
In sharp contrast to the Minkowski case, where any non-vanishing acceleration yields a thermal response, in AdS there is a 
critical acceleration $A^2_{\text{crit}}=-\kappa^2$. Observers accelerating with $A^2>A^2_{\text{crit}}$ do not experience 
any thermal bath. This behavior can be easily understood in terms of bifurcate Killing horizons \cite{Jacobson:1997ux}. 
Indeed the Killing vector \eqref{KillingAdS} is globally timelike and therefore no horizon is present. Observers following 
its orbits have accelerations always above the critical acceleration. This means that $T^2_U<0$ so it is not possible to 
interpret $T^2_U$ as a physical temperature. However, starting from the embedding \eqref{adsembedding}, it is possible to 
consider the following embedding coordinates \cite{Deser:1998xb}:
\begin{equation*}
      \begin{aligned}
    X_{0}&=\sqrt{\xi^2-\kappa^{-2}}\sinh{\kappa\eta}, \  \
    X_{1}=\sqrt{\xi^2-\kappa^{-2}}\cosh{\kappa\eta}, \ \\
    X_{2}&=\xi\sinh{\psi}\cos{\delta}, \ \
    X_{3}=\xi\sinh{\psi}\sin{\delta}, \ \
    X_{4}=\xi\cosh{\psi},
    \end{aligned}
\end{equation*}
where $\eta\in(-\infty,\infty)$, $\psi\in(-\infty,\infty)$, $\delta\in(-\pi,\pi)$ and $\xi>\kappa^{-1}$. The metric is then:
\begin{equation}
    ds^2=(\kappa^2\xi^2-1)d\eta^2-(\kappa^2\xi^2-1)^{-1}d\xi^2-\xi^2\left(d\psi^2+\sinh^2{(\psi)}d\delta^2\right).
    \label{adS Unruh}
\end{equation}
The metric does not depend on $\eta$, so the vector field $\partial_\eta$ is a Killing vector. It is time-like in the region 
$\xi>\kappa^{-1}$ and becomes lightlike $\xi=\kappa^{-1}$, which therefore defines a Killing horizon for $\partial_\eta$. 
Observers following the orbits of  $\partial_\eta$ are vorticity-free and experience constant acceleration below the critical value:
\begin{equation*}
    A^2=-\frac{\kappa^2\xi^2}{(\xi^2-\kappa^{-2})} < A_{\text{crit}}^2
\end{equation*}
The presence of the Killing horizon implies that these observers experience the Unruh effect. In dS the condition $T^2=T_U^2$ 
reproduces the well-known Gibbons-Hawking temperature $T_0=\kappa/2\pi$ , which arises from the cosmological horizon in 
the static patch \cite{Gibbons:1977mu}.

It is natural to ask whether our distilled $\langle\widehat{T}_{\mu\nu}\rangle$ in AdS/dS exhibits a similar behavior to 
the flat space-time result and whether our results are sensitive to the existence of a critical acceleration. In AdS we 
have derived the distilled stress–energy tensor in the global equilibrium state generated by the globally timelike Killing 
vector. However, under the conjectured universality of the distilled expression, it is also possible to investigate the 
behavior associated with $\partial_\eta$. Indeed this Killing vector shares the relevant kinematical properties of the 
globally timelike one, such as non-vanishing acceleration and vanishing vorticity.

In AdS/dS for conformal coupling we can express the analytic part of the stress-energy tensor \eqref{distadsset} as:
\begin{equation} \label{distads}
    \langle \widehat{T}_{\mu\nu}\rangle=\Biggl[\frac{\pi^2}{90}T^4-\frac{1}{1440\pi^{2}}
    \Biggl(\frac{R}{12}-A^2\Biggl)^{2}\Biggl](-g_{\mu\nu}+4u_{\mu}u_{\nu}).
\end{equation}
This equation matches the result of \cite{Khakimov:2023emy} in which a gradient expansion for 
the stress-energy tensor of a conformally coupled scalar field in AdS/dS was obtained by imposing conservation at all orders, 
the correct flat space-time limit, the correct trace anomaly and by making the hypothesis that the gradient expansion 
stops at order four. Here we reproduce the same result directly from quantum field theory in curved space-time, without 
invoking any additional assumptions. The distilled stress-energy tensor in AdS \eqref{distads} vanishes at the Unruh 
temperature $T_U^2$ for both the globally timelike Killing vector and for $\partial_\eta$. In the former case the absence 
of a $T^2$ term ensures the cancellation, even though $T^2_U$ cannot be identified with the Unruh temperature. In the 
latter case, as well as in dS, the vacuum subtracted distilled stress-energy tensor vanishes exactly at the physical 
Unruh temperature, as in flat space-time.

In the minimal coupling case in AdS, the scalar coefficients of the distilled stress-energy tensor can be rewritten as follows:
\begin{align}\label{mincoeff}
    \rho & = \frac{\pi^2}{30}\left(T^4+10T^2T_U^2-11T_U^4\right) + 
    \frac{1}{96}R\left(T^2-T_U^2\right)-\frac{3}{32\pi^2}\left(\frac{R}{12}\right)^2  \nonumber \\
    p & = \frac{\pi^2}{90} \left(T^4-20T^2T_U^2+19T_U^4\right) + 
    \frac{1}{288}R\left(T^2-T_U^2\right) +\frac{3}{32\pi^2}\left(\frac{R}{12}\right)^2  \nonumber \\
    \mathcal{A} & = \frac{1}{12}\left(T^2-T_U^2\right).
\end{align}
For the globally timelike Killing vector it is not possible to identify $T^2=T_U^2$ and there is no value of $T_0$ for 
which the stress-energy tensor vanishes. By contrast, for the state of global equilibrium described by $\partial_\eta$ 
equation \eqref{mincoeff} at $T^2=T_U^2$ yields:
$$
    \langle\hat{T}_{\mu\nu}\rangle (T=T_U) = -\frac{3}{32\pi^2}\left(\frac{R}{12}\right)^2g_{\mu\nu}.
$$
which is finite and remarkably {\em independent} of the Killing vector, suggesting a purely geometrical origin for this remaining term. 

Altogether, the existence of a temperature $T$ at which the stress–energy tensor either vanishes or reduces to a 
Killing-vector–independent contribution does not, by itself, imply the presence of an Unruh effect. However, whenever 
an Unruh effect is associated with the Killing vector under consideration, the distilled stress-energy tensor is found to 
satisfy this property.

\section{Summary and conclusions}
\label{sec:conclu}

In summary we have considered exact solutions of the stress-energy tensor of a free real massless scalar field at global thermodynamic 
equilibrium in different conformally flat space-times and for different equilibrium configurations and we have determined their
analytic part in the gradients of the Killing vector and of the metric tensor by using the method of analytic distillation. 
This method, previously employed successfully in flat space-time, enables to extract the analytic part through an asymptotic 
expansion which, for the quantum field at stake, reduces to a finite sum of terms. After recasting the distilled stress-energy 
tensor in a covariant form, we have found that it fulfills the conservation equation and that the coefficients of its expansion coincide 
in all the examined cases. 

This result bears out a universality conjecture, discussed in Section \ref{sec:conj}, according to which the analytic part in
the gradients of the derivatives of the Killing vector characterizing global equilibrium and the curvature tensors of any 
local operator of any given quantum field theory is independent of the underlying space-time. Otherwise stated, at the analytic thermodynamic response to curvature, acceleration and vorticity appears to be universal. 
On the other hand, the effects of global properties of the space-time and of the boundary conditions appear in non-analytic terms. 
Non-analytic contributions appear both in AdS, where a timelike boundary at infinity is present, and in the Closed Einstein Universe 
in the presence of vorticity. These contributions may be non-negligible with respect to higher orders of the distilled part. 

In the case of conformal coupling we have obtained the most general form of the stress-energy tensor at global equilibrium up to 
the second order in derivatives, which matches the results of previous perturbative calculations. Finally, the analytic 
part of the stress-energy tensor displays a natural connection to the Unruh effect, which is known to be present in AdS, as 
it vanishes for conformal coupling at the Unruh temperature, exactly as in flat space-time, and becomes independent of the 
Killing vector for minimal coupling.

The observed universality may be verified in other cases. If confirmed, it makes it possible to determine the
analytic part of the stress-energy tensor of specific quantum field theories in a curved space-time knowing the solution
elsewhere.

\section*{Acknowledgements}

We are very grateful to V. Ambrus, M. Chernodub, E. Grossi, J. Rausch and D. Wagner for interesting discussions.

%

\appendix

\section{Asymptotic expansion of the boundary contribution}
\label{app:boundaryads}

Taking into account $E_m(y)= C_m(y) + \ii S_m(y)$ and the equation \eqref{emfunc}, it is possible to find the
asymptotic expansion of $C_1$ and $S_2$ by using the techniques based on the Mellin transform:
\begin{equation*}
    C_{1}\left(\beta_{0}\kappa,r\right)\sim-\lim_{y\rightarrow\kappa}\text{Re} \left(\frac{1}{\kappa\beta_{0}}\sum_{t=0}^{\infty}\frac{B_{t}\left(ic\right)}{t!}\eta(-t)(\beta_{0}\kappa)^{t}\right)
\end{equation*}
\begin{equation*}
    S_{2}\left(\beta_{0}\kappa,r\right)\sim-\lim_{y\rightarrow\kappa}\text{Im} \left(\left(\frac{1}{\kappa\beta_{0}}\right)^{2}\sum_{t=1}^{\infty}\frac{B_{t-1}\left(ic\right)}{(t-1)!}\eta(-t)(\beta_{0}\kappa)^{t}\right)
\end{equation*}
Let us focus on $C_1$. The Dirichlet Eta function is zero on negative even integers. The real part of odd 
index Bernoulli polynomials with imaginary argument is given by:
\begin{equation*}
    \text{Re} \left(B_{2p+1}\left(ic\right)\right)=(-1)^{p+1}\frac{1}{2}(2p+1)c^{2p}.
\end{equation*}
Therefore, the asymptotic series for $C_1$ can be written as:
\begin{equation*}
    C_{1}\left(\beta_{0}\kappa,r\right)\sim-\frac{1}{2\beta_{0}\kappa}+\frac{1}{8}-\sum_{l=1}^{\infty}\frac{(-1)^{l+1}(l+\frac{1}{2})}{(2l+1)!}\eta(-(2l+1))(2\arctan(r\kappa))^{2l}
\end{equation*}
Writing the Dirichlet Eta function in terms of the Bernoulli numbers using equation \eqref{etaBern} we can write the series as:
\begin{align*}
    &\sum_{l=1}^{\infty}\frac{(-1)^{l+1}(l+\frac{1}{2})}{(2l+1)!}\eta(-(2l+1))(2y)^{2l} = \sum_{l=1}^{\infty}\frac{(-1)^{l+1}(2l+1)}{(2l+2)!}2^{2l-1}(2^{2l+2}-1)B_{2l+2}y^{2l}=\\
    &= -\frac{1}{8}\sum_{l=1}^{\infty}\frac{(-1)^{l}(2l+1)}{(2l+2)!}2^{2l+2}(2^{2l+2}-1)B_{2l+2}y^{2l}=-\frac{1}{8}\tan^{2}(y)
\end{align*}
where $y=\arctan{r\kappa}$ and we have used the Taylor series for $\tan^2(y)$. The asymptotic expansion for $C_{1}\left(\beta_{0}\kappa,r\right)$ can finally be written as:

\begin{equation*}
C_{1}\left(\beta_{0}\kappa,r\right)\sim-\frac{1}{2\beta_{0}\kappa}+\frac{1}{8}+\frac{1}{8}(r\kappa)^2.
\end{equation*}
The calculation for the asymptotic series of $S_2$ is analogous to the one for $S_0$ and $C_1$ and it's based on \eqref{etaBern}.

\section{Integral terms in the Closed Einstein Universe}
\label{app:CEU}

The integral term in the CEU for the homogeneous equilibrium vanishes exactly
\begin{equation*}
    I=\int_0^{\infty}\left(\frac{2+\cosh{x}}{(\sinh{(x/2)})^4}-\frac{48}{x^4}\right)dx = 0
\end{equation*}
The integral terms that appear in the presence of vorticity in the CEU are given by:
\begin{equation*}
    I_\rho = \int_0^{\infty}\left(\frac{R(x)}{[1-S-\cosh{(x)}+S\cosh{(x\omega_0)}]^3}-\frac{36}{(1-S\omega_0^2)x^4}-\frac{2\omega_0^2(1-S)}{(1-S\omega_0^2)^2x^2}\right)dx
\end{equation*}
\begin{equation*}
    I_p = \int_0^\infty \left(\frac{P(x)}{[1-S-\cosh{(x)}+S\cosh{(x\omega_0)}]^3} -\frac{24}{(1-S\omega_0^2)x^4}-\frac{
    4\omega_0^2(1-S)}{(1-S\omega_0^2)^2x^2}\right)dx
\end{equation*}
\begin{equation*}
    I_\mathcal{A} = \int_0^{\infty}\left(\frac{F_A(x)}{[1-S-\cosh{(x)}+S\cosh{(x\omega_0)}]^3}
    \right)dx
\end{equation*}
\begin{equation*}
    I_W = \int_0^{\infty}\left(\frac{F_W(x)}{[1-S-\cosh{(x)}+S\cosh{(x\omega_0)}]^3}+\frac{2(1-S)\omega_0^2}{(1-S\omega_0^2)^2x^2}\right)dx
\end{equation*}
\begin{equation*}
    I_q = \int_0^{\infty}\left(\frac{F_q(x)}{[1-S-\cosh{(x)}+S\cosh{(x\omega_0)}]^2}-\frac{2\omega_0(1-\omega_0^2)}{(1-S\omega_0^2)^2x^2}\right)dx
\end{equation*}

\section{Covariant form of the stress-energy tensor}
\label{app:covform}

The general form of the stress-energy tensor at second order in derivatives in the non-rotating ($\omega_0=0$) CEU can be
written as:
\begin{align*}
    \beta^2\langle\widehat{T}_{\mu\nu}\rangle^{(2)} =& \ b_3R\ u_{\mu}u_{\nu} - c_3R\ \Delta_{\mu\nu} + d_4R_{\mu\nu} = 6\kappa^2u_\mu u_\nu+2\kappa^2(d_4-3c_3)\Delta_{\mu\nu}
\end{align*}
because for the equilibrium without rotation $u=\partial_t$ hence the Ricci tensor in eq. \eqref{RicciCEU} can be written as
$R_{\mu\nu}=2\kappa^2\Delta_{\mu\nu}$. At this order the distilled expression \eqref{distHomogeneousCEU} vanishes, so we obtain 
the following constraints:
\begin{equation*}
    b_3=0, \qquad \qquad d_4-3c_3=0.
\end{equation*}
In the presence of vorticity, the general form of the stress-energy tensor in the CEU reads:
\begin{align}\label{gen2ord}
    \langle\widehat{T}_{\mu\nu}\rangle^{(2)} =&\rho^{(2)} u_{\mu}u_{\nu}-p^{(2)}\Delta_{\mu\nu}+\mathcal{A}^{(0)}A_{\mu}A_{\nu}+W^{(0)}\omega_{\mu}\omega_{\nu}+p_R^{(0)}R_{\mu\nu}+q^{(0)}
    \left(u_{\mu}l_{\nu}+u_{\nu}l_{\mu}\right)+ \nonumber \\
    &+p_{Ru}^{(0)}\left(u_{\mu} R_{\nu}^{\ \ \alpha} u_\alpha + u_\nu R_{\mu}^{\ \ \alpha}u_\alpha\right) 
\end{align}
where the scalar functions are defined in eq. \eqref{generalSecondOrder}. The above expression is to be matched to the
equation \eqref{setesu} with the scalar functions in \eqref{distCEUcoeff} up to second order in derivatives, so:
\begin{align}\label{ceu2order}
    \dist_{\kappa^2=0}\langle\widehat{T}_{\mu\nu}\rangle^{(2)}_{CEU} =& \left(\frac{(1-S)\omega_0^2\kappa^2}{36(1-S\omega_0^2)^3\beta_0^2}\right)u_{\mu}u_\nu -\left(\frac{(1-S)\omega_0^2\kappa^2}{36(1-S\omega_0^2)^3\beta_0^2}\right)\Delta_{\mu\nu}+\left(-\frac{1}{18(1-S\omega_0^2)\beta_0^2}\right)\omega_\mu\omega_\nu + 
    \nonumber \\
    &+\left(\frac{(1-\omega_0^2)}{18\omega_0^2(1-S\omega_0^2)\beta_0^2}\right)(u_\mu l_\nu + u_\nu l_\mu)
\end{align}
In order to obtain the coefficients, it is convenient to consider the projections of both \eqref{gen2ord} and
\eqref{ceu2order} and equate them. Notably, from \eqref{gen2ord}:
\begin{align*}
    u^{\mu}u^{\nu}\langle\widehat{T}_{\mu\nu}\rangle^{(2)} &= \rho^{(2)} + (p_R^{(0)}+ 2p_{Ru}^{(0)})R^{\mu\nu}u_\mu u_\nu 
    = T^2\left(b_1A^2+b_2\omega^2+b_3R+(b_4+d_4+2d_5)R_{\mu\nu}u^{\mu}u^{\nu}\right) \\
    \hat{A}^\mu\hat{A}^\nu\langle\widehat{T}_{\mu\nu}\rangle^{(2)} & = p^{(2)}-
    \mathcal{A}^{(0)}A^2+p_R^{(0)}R_{\mu\nu}\hat{A}^\mu\hat{A}^\nu = T^2\left((c_1-d_1)A^2+c_2\omega^2+
    c_3R+c_4R_{\mu\nu}u^\mu u^\nu+d_4R_{\mu\nu}\hat{A}^{\mu}\hat{A}^{\nu}\right) \\
    \hat{\omega}^\mu\hat{\omega}^\nu\langle\widehat{T}_{\mu\nu}\rangle^{(2)} &= p^{(2)}-W^{(0)}\omega^2+
    p_R^{(0)}R_{\mu\nu}\hat{\omega}^\mu\hat{\omega}^\nu = T^2\left(c_1A^2+(c_2-d_2)\omega^2+c_3R+
    c_4R_{\mu\nu}u^{\mu}u^{\nu}+d_4R_{\mu\nu}\hat{\omega}^\mu\hat{\omega}^{\nu}\right) \\
    u^\mu \hat{l}^\nu\langle\widehat{T}_{\mu\nu}\rangle^{(2)} &= (p_R^{(0)}+p^{(0)}_{Ru})R_{\mu\nu}u^\mu \hat{l}^\nu 
    - q^{(0)}\sqrt{-l^2} = T^2\left((d_4+d_5)R_{\mu\nu}u^\mu \hat{l}^\nu-d_3\sqrt{-l^2}\right) 
\end{align*}
For instance, the right hand side of the first equation above is to be matched with the corresponding projection of 
the \eqref{ceu2order}:
\begin{equation*}
    u^\mu u^\nu\langle\widehat{T}_{\mu\nu}\rangle_{CEU}^{(2)} = \frac{(1-S)\omega_0^2\kappa^2}{36(1-S\omega_0^2)^3\beta_0^2}
\end{equation*}
and the resulting equation reads, taking into account the eq. \eqref{KillingCEU} and $T^2= 1/\beta^2$
\begin{equation*}
    6b_3-b_2\omega_0^2+(b_2-12b_3-2(b_4+d_4+2d_5))S\omega_0^2+(6b_3-b_1)S\omega_0^4+(2(b_4+d_4+2d_5)+b_1)S^2\omega_0^4
    =\frac{(1-S)\omega_0^2}{36}
\end{equation*}
Solving in powers of $S$ and $\omega_0^2$, a set of constraints can be obtained leading to the following 
conditions:
\begin{equation*}
 b_1=0, \qquad b_2=-\frac{1}{36}, \qquad b_3=0, \qquad b_4+d_4+2d_5=0  
\end{equation*}
The same procedure can be applied to the other projections, eventually leading to:
\begin{align*}
&    c_1=d_1, \qquad c_2=-\frac{1}{36}, \qquad 3c_3-d_4=0, \qquad c_4=0 \\
&    c_1=0, \qquad d_2-c_2=\frac{1}{18}, \qquad 3c_3-d_4=0, \qquad c_4=0 \\
&    d_3=\frac{1}{18}, \qquad d_4+d_5=-\frac{1}{36}
\end{align*}
Altogether, the relations between coefficients constitute an overdetermined system of equations. Nevertheless,
it is straightforward to show that equation \eqref{2ndorder} is the only possible solution.

At the fourth order in derivatives, and taking into account parity and time-reversal transformations, our {\em ansatz} 
for the stress-energy tensor is given by:
\begin{align}
    \langle\widehat{T}_{\mu\nu}\rangle^{(4)} =\ & \rho^{(4)}u_\mu u_\nu - p^{(4)}\Delta_{\mu\nu}+ \mathcal{A}^{(2)}A_\mu A_\nu + W^{(2)} \omega_\mu\omega_\nu + 2q^{(2)}u_{(\mu}l_{\nu)} + p_R^{(2)}R_{\mu\nu} +\nonumber \\
    & 2p_{Ru}^{(2)}u_{(\mu}R_{\nu)}^{\ \ \alpha}u_\alpha +2 p_{RA}^{(0)}A_{(\mu}R_{\nu)}^{\ \ \alpha}A_\alpha +2 p_{R\omega}^{(0)}\omega_{(\mu}R_{\nu)}^{\ \ \alpha}\omega_\alpha + p_{RR}^{(0)}R_\mu^{\ \ \alpha} R_{\nu\alpha} +\nonumber \\
    &  p_{RuRu}^{(0)}R_{\mu}^{\ \ \alpha}u_\alpha R_{\nu}^{\ \ \beta}u_{\beta} + 2p_{A\omega}^{(2)}A_{(\mu}\omega_{\nu)} + 2p_{uRl}^{(0)}u_{(\mu}R_{\nu)}^{\ \ \alpha}l_{\alpha} + 2p_{lRu}^{(0)}l_{(\mu}R_{\nu)}^{\ \ \alpha}u_{\alpha}
    \label{GeneralFormOrder4}
\end{align}
where the scalar functions read:
\begin{align*}
    \rho^{(4)}=\ &b_1A^4+b_2RA^2+b_3R\omega^2+b_4R^2+b_5\omega^4+b_6A^2\omega^2+b_7(R^{\mu\nu}u_{\mu}u_{\nu})^2+\nonumber\\
    &b_8(R^{\mu\nu}u_{\mu}u_{\nu})R+b_9(R^{\mu\nu}u_{\mu}u_{\nu})A^2+b_{10}(R^{\mu\nu}u_{\mu}u_{\nu})\omega^2+b_{11}(R^{\mu\nu}u_{\mu}l_{\nu})+ \nonumber\\
    &b_{12}R^{\mu\nu}R_{\mu\nu}+b_{13}(R^{\mu\nu}R_{\mu\alpha}u^{\alpha}u_{\nu})+b_{14}(R^{\mu\nu}A_{\mu}A_{\nu})+b_{15}(R^{\mu\nu}\omega_{\mu}\omega_{\nu})+b_{16}(A^{\mu}\omega_{\nu})^2,
\end{align*}
\begin{align*}
    p^{(4)}=\ &c_1A^4+c_2RA^2+c_3R\omega^2+c_4R^2+c_5\omega^4+c_6A^2\omega^2+c_7(R^{\mu\nu}u_{\mu}u_{\nu})^2+\nonumber\\
    &c_8(R^{\mu\nu}u_{\mu}u_{\nu})R+c_9(R^{\mu\nu}u_{\mu}u_{\nu})A^2+c_{10}(R^{\mu\nu}u_{\mu}u_{\nu})\omega^2+c_{11}(R^{\mu\nu}u_{\mu}l_{\nu})+ \nonumber\\
    &c_{12}R^{\mu\nu}R_{\mu\nu}+c_{13}(R^{\mu\nu}R_{\mu\alpha}u^{\alpha}u_{\nu})+c_{14}(R^{\mu\nu}A_{\mu}A_{\nu})+c_{15}(R^{\mu\nu}\omega_{\mu}\omega_{\nu})+c_{16}(A^{\mu}\omega_{\nu})^2.
\end{align*}
and:
\begin{align*}
    \mathcal{A}^{(2)}&=d_1A^2+d_2\omega^2+d_3R+d_4R^{\mu\nu}u_\mu u_\nu \\
    W^{(2)}&=e_1A^2+e_2\omega^2+e_3R+e_4R^{\mu\nu}u_\mu u_\nu  \\
    q^{(2)}&=f_1A^2+f_2\omega^2+f_3R+f_4R^{\mu\nu}u_\mu u_\nu   \\
    p_R^{(2)}&=g_1A^2+g_2\omega^2+g_3R+g_4R^{\mu\nu}u_\mu u_\nu  \\
    p_{Ru}^{(2)}&=h_1A^2+h_2\omega^2+h_3R+h_4R^{\mu\nu}u_\mu u_\nu  \\
    p_{A\omega}^{(2)}&=i_1(A\cdot\omega)
\end{align*}
while the order zero coefficients are pure numbers. It should be pointed out that this is not the most general 
expression one can write in a generic space-time because contributions from the Weyl tensor and from the gradients 
of the Ricci tensor are not considered because the corresponding coefficients cannot be determined from our 
distilled results. By using the same method described for the second order calculation some constraints 
on the coefficients can be obtained. 
In particular using \eqref{GeneralFormOrder4} and the distilled result in the CEU with non-vanishing vorticity we 
are able to determine all the coefficients appearing in flat space-time, effectively recovering \eqref{minkset}.

We now come to the minimal coupling case. The general form of the stress-energy tensor at second order in derivatives 
without vorticity is given by \eqref{2AdSMinimal}. We use the equality $R_{\mu\nu}=(R/4)g_{\mu\nu}$ 
to rewrite it in AdS as:
\begin{equation}\label{minasdansatz}
    \beta^2\langle\widehat{T}_{\mu\nu}\rangle^{(2)} = (b_1A^2+\Tilde{b}_3 R)u_{\mu}u_{\nu}-(c_1A^2+\Tilde{c}_3R)
    \Delta_{\mu\nu}+d_1A_{\mu}A_{\nu}
\end{equation}
where:
\begin{equation*}
    \Tilde{b}_3 = b_3+\frac{1}{4}(b_4+d_4+2d_5), \qquad \qquad \Tilde{c}_3 = c_3+\frac{1}{4}(c_4-d_4)
\end{equation*}
The above expression is to be matched to the distilled result in AdS, which is given by the \eqref{AmbrusTmunu}
with scalar functions in \eqref{rhods} and \eqref{othersds}. The first matching condition arises from the double 
projection along the four velocity:
\begin{equation*}
    \beta^2u^\mu u^\nu\langle\widehat{T}_{\mu\nu}\rangle_{\text{AdS}}^{(2)} = -
    \frac{\kappa^2(5+3r^2\kappa^2)}{24(1+r^2\kappa^2)} \qquad \qquad \beta^2u^\mu u^\nu\langle\widehat{T}_{\mu\nu}\rangle^{(2)} 
    = b_1A^2+\Tilde{b_2}R
\end{equation*}
leading to:
\begin{equation*}
    b_1=-\frac{1}{12},  \qquad \qquad  \Tilde{b}_3=\frac{5}{288}
\end{equation*}
Projecting the general {\em ansatz} \eqref{minasdansatz} and the distilled stress-energy tensor in AdS
along the two other different independent directions we obtain a set of constraints yielding:
\begin{equation*}
    c_1=\frac{1}{18}, \qquad \tilde{c}_3=-\frac{1}{864}, \qquad d_1=\frac{1}{12}
\end{equation*}
At fourth order, using the same property of the Ricci tensor in AdS, we have the following ansatz:
\begin{equation*}
    \langle\widehat{T}_{\mu\nu}\rangle^{(4)} = (b_1A^4+\Tilde{b}_2RA^2+\Tilde{b}_4R^2)u_{\mu}u_{\nu}-(c_1A^4+\Tilde{c}_2RA^2+\Tilde{c}_4R^2)\Delta_{\mu\nu}+(d_1A^3+\Tilde{d}_3R)A_{\mu}A_{\nu}
\end{equation*}
where:
\begin{align*}
    \tilde{b}_2&= b_2 +\frac{1}{4}(b_9+b_{14}+h_1-g_1)  \\
    \tilde{b}_4&=b_4+\frac{1}{16}(b_7+4b_8+b_{12}+b_{13}+4h_3+h_4+p_{RuRu}^{(0)}-4g_3-g_4-p_{RR}^{(0)}) \\
    \tilde{c}_2&=c_2+\frac{1}{4}(c_9+c_{14}-g_1) \\
    \tilde{c}_4&=c_4+\frac{1}{16}(c_7+4c_8+c_{12}+c_{13}-4g_3-g_4-p_{RR}^{(0)}) \\
    \tilde{d}_3&=d_3+\frac{1}{4}(d_4+p_{RA}^{(0)})
\end{align*}
The obtained relations read:
\begin{align*}
    b_1&=-\frac{11}{480\pi^2} \qquad\qquad \Tilde{b}_2=\frac{37}{5760\pi^2} \qquad\qquad \Tilde{b}_4=-\frac{71}{69120\pi^2} \\
    c_1&=-\frac{19}{1440\pi^2} \qquad\qquad \Tilde{c}_2=\frac{23}{17280\pi^2} \qquad\qquad \Tilde{c}_4=-\frac{139}{207360\pi^2} \\
    d_1&=-\frac{1}{48}, \qquad\qquad \Tilde{d}_3=\frac{1}{576}  
\end{align*}

\end{document}